\theoremstyle{plain}
\newtheorem{theorem}{Theorem}[section]
\newtheorem{lemma}[theorem]{Lemma}
\theoremstyle{definition}
\newtheorem{definition}[theorem]{Definition}
\newtheorem{assumption}[theorem]{Assumption}
\theoremstyle{remark}
\newtheorem{remark}[theorem]{Remark}
\newtheorem{problem}{Problem}[section]
\icmltitlerunning{Vertical Federated Graph Neural Network for Recommender System}
\begin{document}

\twocolumn[
\icmltitle{Vertical Federated Graph Neural Network for Recommender System}
% It is OKAY to include author information, even for blind
% submissions: the style file will automatically remove it for you
% unless you've provided the [accepted] option to the icml2023
% package.

% List of affiliations: The first argument should be a (short)
% identifier you will use later to specify author affiliations
% Academic affiliations should list Department, University, City, Region, Country
% Industry affiliations should list Company, City, Region, Country

% You can specify symbols, otherwise they are numbered in order.
% Ideally, you should not use this facility. Affiliations will be numbered
% in order of appearance and this is the preferred way.
\icmlsetsymbol{equal}{*}

\begin{icmlauthorlist}
\icmlauthor{Peihua Mai}{yyy}
\icmlauthor{Yan Pang}{yyy}
\end{icmlauthorlist}

\icmlaffiliation{yyy}{Department of Analytics and Operations, National University of Singapore, 119077 Singapore}

\icmlcorrespondingauthor{Yan Pang}{jamespang@nus.edu.sg}
% You may provide any keywords that you
% find helpful for describing your paper; these are used to populate
% the "keywords" metadata in the PDF but will not be shown in the document
\icmlkeywords{Graph Neural Network, Federated Learning, Recommender System}

\vskip 0.3in
]

% this must go after the closing bracket ] following \twocolumn[ ...

% This command actually creates the footnote in the first column
% listing the affiliations and the copyright notice.
% The command takes one argument, which is text to display at the start of the footnote.
% The \icmlEqualContribution command is standard text for equal contribution.
% Remove it (just {}) if you do not need this facility.

\printAffiliationsAndNotice{}  % leave blank if no need to mention equal contribution
% \printAffiliationsAndNotice{\icmlEqualContribution} % otherwise use the standard text.

\begin{abstract}
Conventional recommender systems are required to train the recommendation model using a centralized database. However, due to data privacy concerns, this is often impractical when multi-parties are involved in recommender system training. Federated learning appears as an excellent solution to the data isolation and privacy problem. Recently, Graph neural network (GNN) is becoming a promising approach for federated recommender systems. However, a key challenge is to conduct embedding propagation while preserving the privacy of the graph structure. Few studies have been conducted on the federated GNN-based recommender system. Our study proposes the first vertical federated GNN-based recommender system, called VerFedGNN. We design a framework to transmit: (i) the summation of neighbor embeddings using random projection, and (ii) gradients of public parameter perturbed by ternary quantization mechanism. Empirical studies show that VerFedGNN has competitive prediction accuracy with existing privacy preserving GNN frameworks while enhanced privacy protection for users' interaction information.
\end{abstract}

\section{Introduction}

Graph neural network (GNN) has become a new state-of-art approach for recommender systems. The core idea behind GNN is an information propagation mechanism, i.e., to iteratively aggregate feature information from neighbors in graphs. The neighborhood aggregation mechanism enables GNN to model the correlation among users, items, and related features. Compared with traditional supervised learning algorithms, GNN can model high-order connectivity through multiple layers of embedding propagation and thus capture the similarity of interacted users and items \cite{gao2022graph}. Besides, the GNN-based model can effectively alleviate the problem of data sparsity by encoding semi-supervised signals over the graph \cite{jin2020multi}. Benefiting from the above features, GNN-based models have shown promising results and outperformed previous methods on the public benchmark datasets \cite{berg2017graph, wang2019neural, wang2019binarized}.

Instead of training the model based on their own graphs, organizations could significantly improve the performance of their recommendation by sharing their user-item interaction data \cite{li2024fedcore}. However, such sharing might be restricted by commercial competition or privacy concerns, leading to the problem of data isolation. For example, the data protection agreement might prohibit the vendors from transferring users' personal data, including their purchase and clicking records, to a third party.

Federated learning (FL), a machine learning setting with data distributed in multiple clients, is a potential solution to the data isolation problem. It enables multiple parties to collaboratively train a model without sharing their local data \cite{peyre2019computational}. A challenge in designing the federated GNN-based recommender system is how to perform neighborhood aggregation while keeping the graph topology information private. Indeed, each client should obtain the items their users have interacted with in other organizations to conduct embedding propagation. However, the user interaction data are supposed to keep confidential in each party, adding to the difficulty of federated implementation. 

Few studies have been conducted on the federated implementation of GNN-based recommender systems. To our best knowledge, FedPerGNN is the first federated GNN-based recommender system on user-item graphs \cite{wu2022federated}. However, their work considers a horizontal federated setting where each client shares the same items but with different users. This paper studies the vertical federated setting in which multiple collaborating recommenders offer different items to the same set of users. In addition, FedPerGNN expands the local user-item graphs with anonymous neighbouring user nodes to perform embedding propagation, which could leak users' interaction information (see Section \ref{privexpansion}). 

To fill the gap, this paper proposes a vertical federated learning framework for the GNN-based recommender system (VerFedGNN)\footnote{Source code: https://github.com/maiph123/VerticalGNN}. Our method transmits (i) the neighbor embedding aggregation reduced by random projection, and (ii) gradients of public parameter perturbed by ternary quantization mechanism. The privacy analysis suggests that our approach could protect users' interaction data while leveraging the relation information from different parties. The empirical analysis demonstrates that VerFedGNN significantly enhance privacy protection for cross-graph interaction compared with existing privacy preserving GNN frameworks while maintaining competitive prediction accuracy.

Our main contributions involves the following:
\begin{itemize}
    \item To the best of our knowledge, we are the first to study the GNN-based recommender system in vertical federated learning. We also provide a rigorous theoretical analysis of the privacy protection and communication cost. The experiment results show that the performance of our proposed federated algorithm is comparable to that in a centralized setting.
    \item We design a method to communicate projected neighborhood aggregation and quantization-based gradients for embedding propagation across subgraphs. Our method outperforms existing framework in terms of accuracy and privacy. The proposed framework could be generalized to the horizontal federated setting.
    \item We propose a de-anonymization attack against existing federated GNN framework that infers cross-graph interaction data. The attack is simulated to evaluate its performance against different privacy preserving GNN approaces.
\end{itemize}

\section{Literature Review}

\textbf{Graph Neural Network (GNN) for Recommendation:} There has been a surge of studies on designing GNN-based recommender systems in recent years. Graph convolutional matrix completion (GC-MC) employs a graph auto-encoder to model the direct interaction between users and items \cite{berg2017graph}. To model high-order user-item connectivities, neural graph collaborative filtering (NGCF) stacks multiple embedding propagation layers that aggregate embeddings from neighbouring nodes \cite{wang2019neural}. LightGCN simplifies the design of NGCF by demonstrating that two operations, feature transformation and nonlinear activation, are redundant in the GCN model \cite{he2020lightgcn}. PinSage develops an industrial application of GCN-based recommender systems for Pinterest image recommendation \cite{ying2018graph}. To achieve high scalability, it samples a neighborhood of nodes based on a random walk and performs localized aggregation for the node.

\textbf{Federated Graph Neural Network:} Recent research has made progress in federated graph neural network. Most of the work either performs neighborhood aggregation individually \cite{zhou2020vertically, liu2021federated}, or assumes that the graph topology could be shared to other parties \cite{chen2021fedgraph}. It is a non-trivial task to incorporate the cross-client connection while preserving the relation information between nodes. To protect the graph topology information, a direct method is to apply differential privacy on the adjacency matrix \cite{zhang2021fastgnn}, which requires a tradeoff between privacy protection and model performance. In FedSage+, a missing neighbour generator is trained to mend the local subgraph with generated cross-subgraph neighbors \cite{zhang2021subgraph}. To our best knowledge, FedPerGNN is the first work that develops horizontal federated GNN-based recommender system with user-item graphs \cite{wu2022federated}. Each client expands the local user-item graphs with anonymous neighbouring user node to learn high-order interaction information.

This paper considers the unexplored area, i.e., graph learning for recommender systems in a vertical federated setting. We show that applying local graph expansion could leak user interaction information from other parties, and develop a GNN-based recommender system that leverages the neighbourhood information across different subgraphs in a privacy-preserving way. 

\section{Problem Formulation and Background}
\subsection{Problem Statement}
We assume that $P$ parties collaboratively train a recommender system, where each party holds a set of common users but non-overlapping items. Denote $\mathcal{U}=\{u_1, u_2, ..., u_N\}$ as the set of common users and $\mathcal{V}_p=\{v_1, v_2, ..., v_{M_p}\}$ as the set of items for party $p$. The total item size $M_p,\ p\in [1,P]$ is shared across parties. Assume that user $u$ has $N_u$ neighboring items $\mathcal{N}(u) = \{v_1, v_2,...,v_{N_u}\}$, and item $v$ has $N_v$ neighboring users $\mathcal{N}(v) = \{u_1, u_2,...,u_{N_v}\}$. Denote $\mathcal{N}_p(u) = \{v_1, v_2,...,v_{N_u^p}\}$ as the neighboring items for user $u$ in party $p$. The related items and users form a local subgraph $\mathcal{G}_p$ in party $p$. Denote $r_{uv}$ as the rating user $u$ gives to item $v$. Our objective is to generate a rating prediction that minimizes the squared discrepancy between actual ratings and estimate.

\subsection{Graph Neural Network (GNN)}
\label{gnnoverview}
\textbf{General Framework:} We adopt a general GNN \cite{wu2022graph} framework as the underlying model for our recommender system. In the initial step, each user and item is offered an ID embedding of size $D$, denoted by $e_u^0, e_v^0\in R^D$ respectively. The embeddings are passed through $K$ message propagation layers:
\begin{equation}
\begin{gathered}
\label{useremb}
n_u^{k} = Agg_k (\{e_v^k, \forall{v\in \mathcal{N}(u)}\})\\
e_u^{k+1} = Update_k (e_u^k, n_u^k)
\end{gathered}
\end{equation}
\begin{equation}
\begin{gathered}
\label{itememb}
n_v^{k} = Agg_k (\{e_u^k, \forall{u\in \mathcal{N}(v)}\})\\
e_v^{k+1} = Update_k (e_v^k, n_v^k)
\end{gathered}
\end{equation}
where $e_u^k$ and $e_v^k$ denote the embeddings at the $k^{th}$ layer for user $u$ and item $v$, and $Agg_k$ and $Update_k$ represent the aggregation and update operations, respectively.
The final representation for users (items) are given by the combination of embeddings at each layer:
\begin{equation}
\label{concatelayer}
    h_u = \sum_{k=0}^K a_k e_u^k;\ h_v = \sum_{k=0}^K a_k e_v^k
\end{equation}
where $h_u$ and $h_v$ denote the final representation for user $u$ and item $v$ respectively, $K$ denotes the number of embedding propagation layers, and $a_k$ denotes the trainable parameter implying the weight of each layer.

The rating prediction is defined as the inner product of user and item representation 
\begin{equation}
\label{ratingpred}
    \hat{r}_{uv} = h_u^T h_v
\end{equation}

The loss function is computed as:
\begin{equation}
\label{eq:loss}
    \mathcal{L} = \sum_{(u,v)} (\hat{r}_{uv}-r_{uv})^2 + \frac{1}{N}\sum_u\|e_u^0\|_2^2 + \frac{1}{M}\sum_v\|e_v^0\|_2^2
\end{equation}
where $(u,v)$ denotes pair of interacted user and item, and $N$ and $M$ denote the number of users and items respectively.

\textbf{Aggregation and Update Operations:} This paper discusses three typical GNN frameworks: Graph convolutional network (GCN) \cite{kipf2016semi} , graph attention networks (GAT) \cite{velivckovic2017graph}, and gated graph neural network (GGNN) \cite{li2015gated}. We illustrate their aggregation and update operations for user embedding as an example.

\begin{itemize}
\item \textbf{GCN} approximates the eigendecomposition of graph Laplacian with layer-wise propagation rule:
\begin{equation}
\begin{gathered}
\label{GCNuseremb}
Agg_k:\ n_u^{k} = \sum_{v\in \mathcal{N}(u)} \frac{1}{\sqrt{N_uN_v}}e_v^k\\
Update_k:\ e_u^{k+1} = \sigma(W^{k}(e_u^k+n_u^{k}))
\end{gathered}
\end{equation}
\item \textbf{GAT} leverages self-attention layers to assign different importances to neighboring nodes:
\begin{equation}
\begin{gathered}
\label{GATuseremb}
Agg_k:\ n_u^{k} = \sum_{v\in \mathcal{N}(u)} b_{uv}^k e_v^k\\
Update_k:\ e_u^{k+1} = \sigma(W^{k}(b_{uu}^k e_u^k+n_u^{k}))
\end{gathered}
\end{equation}
where $b_{uu}^k$ and $b_{uv}^k$ are the importance coefficients computed by the attention mechanism:
\begin{equation}
\begin{gathered}
\label{GATcoef}
b_{uv}^k = \frac{exp(\Tilde{b}_{uv}^k)}{\sum_{v' \in \mathcal{N}(u) \cup u} exp(\Tilde{b}_{uv'}^k)}
\end{gathered}
\end{equation}
where $b_{uv}^k=Att(e_u^k, e_v^k)$ is computed by an attention function.
\item \textbf{GGNN} updates the embeddings with a gated recurrent unit (GRU):
\begin{equation}
\begin{gathered}
\label{GGNNuseremb}
Agg_k:\ n_u^{k} = \sum_{v\in \mathcal{N}(u)} \frac{1}{N_u}e_v^k\\
Update_k:\ e_u^{k+1} = GRU(e_u^k, n_u^k)
\end{gathered}
\end{equation}
\end{itemize}

\subsection{Differential Privacy}
We adopt the standard definition of $(\epsilon, \delta)$-differential privacy \cite{dwork2014algorithmic} for our analysis.
\begin{definition}
\label{def:dp}
    A randomized function $M(x)$ is $(\epsilon, \delta)$-differentially private if for all $x$, $y$ such that $\|x-y\|_1 \leq 1$ and any measurable subset $S \subseteq \text{Range}(M)$,
    \begin{equation}
        P(M(x) \in S) \leq e^{\epsilon} P(M(y) \in S) + \delta
    \end{equation}
\end{definition}
This paper assumes an untrusted server and requires that the local gradients from each party satisfy $(\epsilon, \delta)$-local differential privacy \cite{duchi2013local}.

\section{Proposed Method}
\subsection{De-anonymization Attack}
\label{privexpansion}
A straightforward cross-graph propagation solution is to use anonymous neighborhood embeddings from other graphs \cite{wu2022federated}. Adapting to the vertical setting, each party sends the encrypted ids of each item's related users to the server, and the server provides each party with the embeddings of their neighboring items via user matching. 

One privacy concern of this approach suffers leakage of interaction information, as is shown by the de-anonymization attack below. We assume that each organization is accessible to the set of available items from other parties. 

Suppose that party A wants to obtain their users' interaction with party B. Party A could create a set of adversarial users that have registered on other platforms. Each fake user rates only one item in party B. The interaction information could be recovered by matching the embeddings for adversarial and honest users. Denote $v_i \in \mathcal{N}(u)$ as the $i^{th}$ item in user $u$'s neighborhood from party B, $\mathcal{N}_{adv}$ as the set of items given by all fake users, and $v_j \in \mathcal{N}_{adv}$ as the item rated by $j^{th}$ adversarial user. The inferred $i^{th}$ item for user $u$ is:

\begin{equation}
\label{eq:attack}
    \hat{v}_{i} = \arg\min_{v' \in \mathcal{N}_{adv}} \|e_v'^0-e_{v_i}^0\|_1
\end{equation}
where $\|\cdot\|$ computes the $l_1$-norm of inner vectors.

The above attack suggest that revealing the individual embedding for each item is susceptible to privacy leakage. Following, we introduce a new method to obtain embeddings at the aggregated level.

\subsection{Federated Graph Neural Network}
\begin{figure}[H]
\vskip 0.1in
\begin{center}
\centerline{\includegraphics[width=0.95\columnwidth]{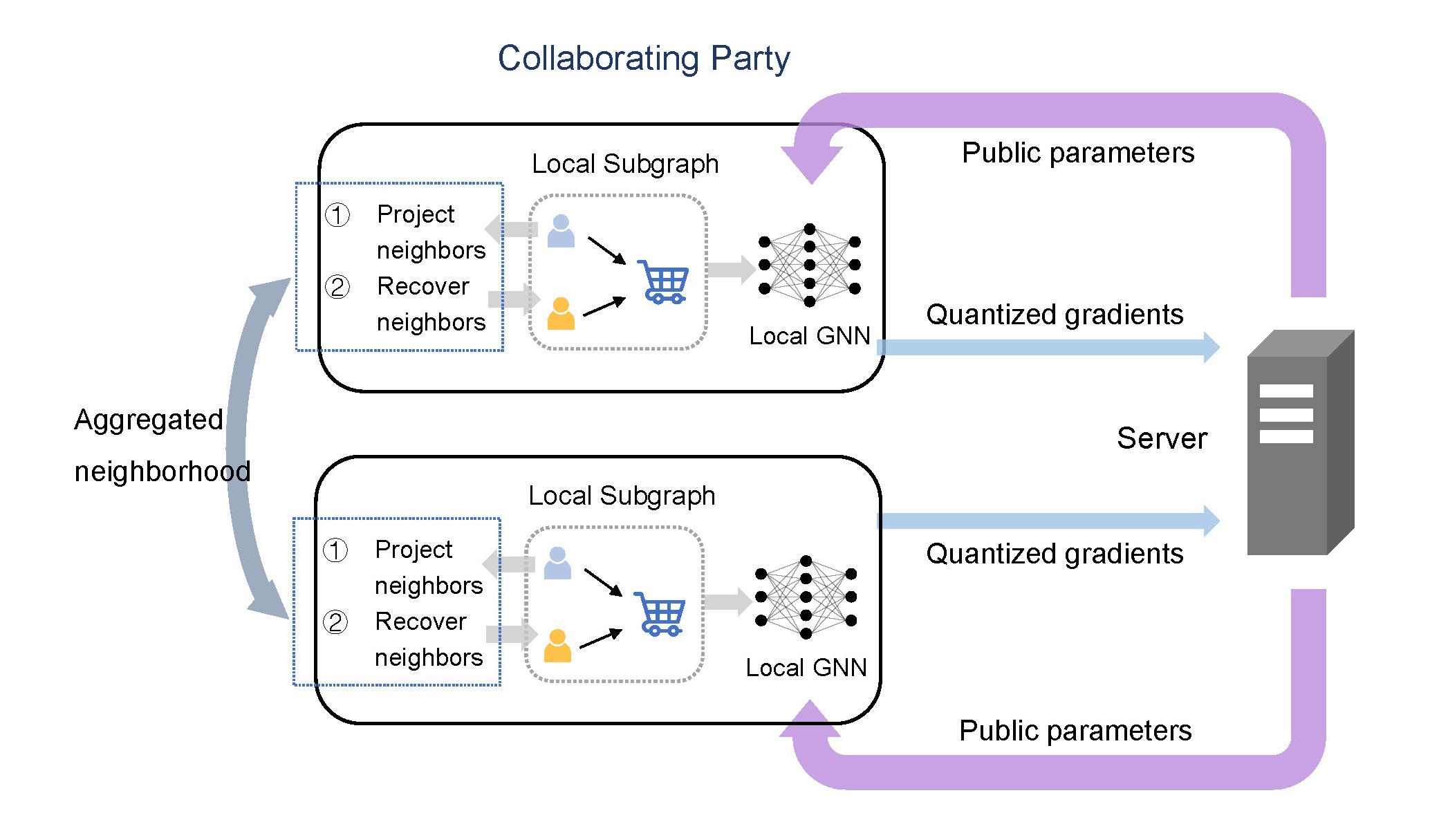}}
\caption{Overall framework of VerFedGNN}
\label{fig:overview}
\end{center}
\vskip -0.1in
\end{figure}
 In the proposed framework, the collaborating parties jointly conduct model training with a central server. Each client $cl_p$ is associated with a party $p$. Item embeddings $e_v^0$ should be maintained privately on each client, while other public parameters are initialized and stored at the server. At the initial step, Private Set Intersection (PSI) is adopted to conduct secure user alignment \cite{pinkas2014faster}. Algorithm \ref{FedVerGNN} outlines the process to perform VerFedGNN. Figure \ref{fig:overview} gives the overall framework of our VerFedGNN, and we will illustrate the key components in the following.

\begin{algorithm}[tb]
   \caption{Federated Vertical Graph Neural Network}
   \label{FedVerGNN}
\begin{algorithmic}
   \STATE \textbf{FL Server:}
   \STATE \textbf{Initialize} public parameters.
   \STATE \textbf{Initialize} projection matrix $\Phi$ and broadcast the seed.
   \FOR{$t\in [1,T]$} 
   \STATE Distribute public parameters to clients $p\in \mathcal{A}_t$
   \STATE Receive and aggregate local gradients from client $p$ for $p\in \mathcal{A}_t$
   \STATE Update public parameters with aggregated gradients
   \ENDFOR
   \STATE
   \STATE \textbf{Client $c, c\in [1, P]$:}
   \STATE \textbf{Initialize} item embeddings.
   \FOR{$t\in [1,T]$} 
   \STATE Download public parameters from server
   \STATE Received projected embedding aggregation $Y_p^k$ for layer $k \in [0, K-1]$ from parties $p\in \mathcal{A}_t \backslash c$
   \STATE Compute aggregated embeddings matrix $X_c^k$ for layer $k \in [0, K-1]$
   \STATE Derive projected matrix $Y_c^k$ using expression \ref{eq:rp}, and send to parties $p\in \mathcal{A}_t \backslash c$
   \STATE Received projected matrix $Y_p^k$ for layer $k \in [0, K-1]$ from parties $p\in \mathcal{A}_t \backslash c$
   \STATE Reconstruct aggregated embeddings matrix using expression \ref{eq:rprec} for layer $k \in [0, K-1]$ and parties $p\in \mathcal{A}_t \backslash c$
   \STATE Conduct layer-wise embedding update with $\hat{X}_p^k$ for $k \in [0, K-1]$ and $p\in \mathcal{A}_t \backslash c$
   \STATE Calculate gradients locally and update private parameters $e_v^0$ for $v\in \mathcal{V}_p$
   \STATE Perturb gradients for public parameters using ternary quantization scheme given by Definition \ref{def:quantScheme}
   \STATE Upload quantized gradients to server
   \ENDFOR
\end{algorithmic}
\end{algorithm}

\subsection{Neighborhood Aggregation}
Instead of sending the individual embeddings of missing neighbors, each party performs embedding aggregation locally for each common user before the transmission. Each party outputs a list of $N\times D$ aggregation matrices $[X_p^0, X_p^1, ..., X_p^{K-1}]$, with each row of $X_p^k$ given by $E_p(n_u^k)$. Below details the local neighborhood aggregation for the three GNN frameworks:

\textbf{GCN} requires $N_u$ to perform local aggregation, while sharing $N_u$ could reveal how many items user $u$ has interacted with in other parties. To preserve the privacy of $N_u^p$, we develop an estimator from party $p$'s view in replacement of $N_u$:
\begin{equation}
    E_p(N_u) = \frac{\sum_i M_i}{M_p}\cdot N_u^p
\end{equation}
where $M_i$ denotes the number of items in party $i$. The estimator is utilized to perform embedding aggregation:
\begin{equation}
\begin{gathered}
     E_p(n_u^k) = \sum_{v\in \mathcal{N}_p(u)} \frac{1}{\sqrt{E_p(N_u)N_v}}e_v^k\\
    % Update_k:\ e_u^{k+1} = \sigma(W^{k}(e_u^k+\sum_p E_p(n_u^k)))
\end{gathered}
\end{equation}

\textbf{GAT} calculates importance coefficient $b_{uv}^k$ using all item embeddings for $v\in \mathcal{N}_u$, incurring further privacy concern and communication cost. Therefore, we adapt equation \ref{GATcoef} to obtain $E_p(b_{uv}^k)$:
\begin{equation}
    E_p(b_{uv}^k)=\frac{exp(\Tilde{b}_{uv}^k)}{exp(\Tilde{b}_{uu}^k+\sum_{v' \in \mathcal{N}_u} exp(\Tilde{b}_{uv'}^k)\cdot \sum_i M_i / M_p)}
\end{equation}
The neighbor items are aggregated locally using:
\begin{equation}
E_p(n_u^k) = \sum_{v\in \mathcal{N}_p(u)} b_{uv}^k e_v^k
\end{equation}

\textbf{GGNN} slightly adapt $Agg_k$ in equation \ref{GGNNuseremb} to perform aggregation:
\begin{equation}
E_p(n_u^k) = \sum_{v\in \mathcal{N}_p(u)} \frac{1}{N_u^p}e_v^k
\end{equation}

Refer to Appendix \ref{app:embupdate} for embedding update with the aggregated neighborhood.

\subsection{Random Projection}
Though neighborhood aggregation reduces the information leakage, users might still be susceptible to de-anonymization attack when they rated few items in other parties. We adopt random projection \cite{lindenstrauss1984extensions} to perform multiplicative data perturbation for two reasons: (1) random projection allows to reduce dimensionality and reconstruct matrix without prior knowledge of the data; (2) random projection preserve the pairwise distance between points with small error \cite{ghojogh2021johnson}. Below we define a Gaussian random projection matrix.
\begin{definition}
\label{def:gaussian}
    For $q\ll N_u$, a Gaussian random projection matrix $\Phi\in \mathbb{R}^{q\times N_u}$ has elements drawn independently from Gaussian distribution with mean $0$ and variance $1/q$.
\end{definition}
 Each active party sends a list of $q\times D$ projected matrices to other participants:
\begin{equation}
\label{eq:rp}
    Y_p^k=\Phi X_p^k
\end{equation}
for $k\in[0, K-1]$. The recipient recover the perturbed aggregation matrices $\hat{X}_p^k,\ k\in[0, K-1]$:
\begin{equation}
\label{eq:rprec}
    \hat{X}_p^k=\Phi^T Y_p^k
\end{equation}

\subsection{Privacy-preserving Parameter Update}
The gradients of public parameters could leak sensitive information about the users. For example, if two users rated the same items, the gradients for their embeddings would be similar. Therefore, a participant could infer subscribers' interaction history by comparing their embeddings with adversarial users. We introduce a ternary quantization scheme \cite{wang2022quantization} to address this issue.

\begin{definition}
\label{def:quantScheme}
The ternary quantization scheme quantizes a vector $x=[x_1,x_2, ..., x_d]^T\in \mathbb{R}^d$ as follows:
\begin{equation}
    Q(x) = [q_1, q_2, ..., q_d],\ \ q_i=r\text{sign}(x_i)b_i,\ \ \forall 1\leq i \leq d
\end{equation}
where $r$ is a parameter such that $\|x\|_{\infty} \leq r$, $\text{sign}$ represents the sign of a value, and $b_i$  are independent variables following the distribution
\begin{equation}
   \left\{  
\begin{array}{lr}  
P(b_i=1|x)=|x_i|/r\\  
P(b_i=0|x)=1-|x_i|/r\\  
\end{array}  
\right. 
\end{equation}
\end{definition}

The ternary quantization scheme is adopted in place of Gaussian or Laplace noise for two reasons: (1) The scale of Gaussian or Laplace noise is determined by the sensitivity that could be significant for high dimensional data. For GNN model with large size, directly adding the calibrated noises would greatly distort the direction of gradients. On the other hand, the quantization scheme ensures that the sign of each element in the gradient is not reversed by the stochastic mechanism. (2) For user embeddings, the gradients is a $N_u \times D$ matrix with communication cost proportional to user size. Under the quantization scheme, parties could send a much smaller sparse matrix indexed by the non-zero binary elements.

\subsection{Partial Participants} 
We consider the case where in each iteration, only a portion of clients participates in model training. The challenge is that both the embedding update and gradient computation contain components summed over all clients to capture the complete graph structure. To address this issue, we develop an estimator of the total summation based on the subsets of components received.

Denote $c_i$ as the component send by party $i$, and $\mathcal{A}_t$ as the set of participating clients in iteration $t$. The estimation $E(C)$ is given by:
\begin{equation}
    E(C) = \frac{\sum_i M_i}{\sum_{i \in \mathcal{A}_t} M_i} \sum_i c_i
\end{equation}

Specifically, the component $c_i$ is $E_p(n_u^k),\ k\in [0, K-1]$ for embedding update, and local gradients for gradient computation, respectively. 

\section{Theoretical Performance Analysis}
\subsection{Privacy Analysis}
The privacy of the algorithm is analyzed for two communication stages: (i) neighborhood exchange, and (ii) gradient transmission. We assume honest-but-curious \cite{yang2019federated} participants for the analysis, i.e., the participant will not deviate from the defined training protocol but attempt to learn information from legitimately received messages. 

\textbf{Neighborhood exchange:} Suppose that an attacker would like to infer the original aggregation matrix $X_p^k$ given $Y_p^k$. The model can be analyzed as an underdetermined system of linear equations with more unknowns than equations $y = \Phi x$, where $x$ is a column vector in $X_p^k$ and $y$ is the corresponding column in $Y_p^k$. We start with the definition of $l$-secure \cite{du2004privacy}.

\begin{definition}
\label{def:lsecure}
A matrix $\Phi$ is $l$-secure if a submatrix $\Phi_k$ formed by removing any $l$ columns from $\Phi$ has full row rank.
\end{definition}

\begin{lemma}
\label{lem:lsecure}
Let $\Psi$ be an $l\times N$ matrix, where each row is a nonzero linear combination of row vectors in $\Phi$. If $\Phi$ l-secure, the linear equations system $y = \Psi x$ involves at least $2l$ variables if these $l$ vectors are linearly independent.
\end{lemma}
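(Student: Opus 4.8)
The plan is to argue by contradiction, reading ``the system $y=\Psi x$ involves the variable $x_i$'' as ``the $i$-th column of $\Psi$ is not identically zero''. So suppose the indices of the non-zero columns of $\Psi$ form a set $S$ with $|S|=t\le 2l-1$; I will exhibit $l$ columns of $\Phi$ whose removal leaves a matrix of rank strictly below $q$, contradicting the $l$-security of $\Phi$ (Definition~\ref{def:lsecure}).

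First I would fix the algebraic setup. Write $\Phi\in\mathbb{R}^{q\times N}$; since every row of $\Psi$ is a linear combination of the rows of $\Phi$, there is $B\in\mathbb{R}^{l\times q}$ with $\Psi=B\Phi$. Linear independence of the $l$ rows of $\Psi$ gives $\operatorname{rank}(\Psi)=l$, and $\operatorname{rank}(B\Phi)\le\operatorname{rank}(B)\le l$ then forces $\operatorname{rank}(B)=l$, so $\dim\ker B=q-l$ by rank--nullity. Writing $\phi_j$ for the $j$-th column of $\Phi$ and $Z=\{1,\dots,N\}\setminus S$, the vanishing of the $j$-th column of $\Psi$ for every $j\in Z$ says $B\phi_j=0$, i.e. $\{\phi_j:j\in Z\}\subseteq\ker B$, a subspace of dimension $q-l$.

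Next comes the removal step. From $\Phi$ I would delete $\min(t,l)$ columns indexed by $S$ and, when $t<l$, a further $l-t$ columns indexed by $Z$; this is feasible because $|Z|=N-t\ge l-t$, as $l$-security already forces $N\ge q+l$. The surviving $q\times(N-l)$ submatrix then has all but at most $\max(t-l,0)$ of its columns inside $\ker B$, so its column span has dimension at most $(q-l)+\max(t-l,0)$. This equals $q-l<q$ when $t\le l$, and is at most $q-2l+t\le q-1<q$ when $l<t\le 2l-1$. In both cases the surviving submatrix lacks full row rank $q$, contradicting $l$-security; hence $t\ge 2l$, which is the claim.

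The only delicate part is the accounting in the removal step --- checking that $Z$ contains enough columns to finish the $l$ deletions when $t<l$, and that the dimension bound on the surviving span is tight enough for $l<t\le 2l-1$. Both rest on the two exact facts $\dim\ker B=q-l$ and $N-l\ge q$ (the latter because an $l$-secure $q\times N$ matrix must admit a full-row-rank $q\times(N-l)$ submatrix). As a consistency check, the case $l=1$ recovers the statement that any single non-zero row-combination $b^{\mathsf T}\Phi$ has at least two non-zero entries, and the general bound is essentially this local support guarantee of size $>l$ amplified to $2l$ by the linear independence of the $l$ rows.
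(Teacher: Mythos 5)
Your argument is correct, and it is genuinely self-contained, which the paper's is not: the paper disposes of this lemma with a one-line citation to Theorem~4.3 of Du et al.\ (2004), whose proof runs through Gaussian elimination --- restrict $\Psi$ to its $t$ support columns, row-reduce to produce a single nonzero combination of the rows of $\Phi$ with support at most $t-(l-1)$, and combine with the fact that $l$-security forces every nonzero row combination to have support at least $l+1$, giving $t\ge 2l$. You instead work with all $l$ rows at once: writing $\Psi=B\Phi$ with $\operatorname{rank}(B)=l$, you observe that every zero column of $\Psi$ puts the corresponding column of $\Phi$ inside $\ker B$, a subspace of dimension $q-l$, and then the deletion-plus-dimension count shows that a support of size $t\le 2l-1$ would leave a $q\times(N-l)$ submatrix of rank at most $(q-l)+\max(t-l,0)<q$, contradicting $l$-security. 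The bookkeeping is right in both cases ($t\le l$ and $l<t\le 2l-1$), the feasibility of the deletions follows from $N\ge q+l$ as you note, and your reading of ``involves the variable $x_i$'' as ``the $i$-th column of $\Psi$ is nonzero'' is the intended one. The echelon-form route buys a reduction to the cleaner single-row statement (support of any nonzero combination exceeds $l$); your kernel-dimension route buys a one-shot argument with no row reduction, at the cost of the slightly fussier case split on $t$ versus $l$. Either is acceptable; yours has the advantage of making the lemma verifiable without consulting the external reference.
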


\begin{theorem}
\label{thm:lsecure}
For $2q\leq m+1$, let $\Phi$ be a $q \times m$ matrix with entries independently chosen from Gaussian distribution. For a linear system of equations $y = \Phi x$, it is impossible to solve the exact value of any element in $x$.
\end{theorem}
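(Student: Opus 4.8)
The plan is to reduce the statement to a purely linear-algebraic fact about the row space of $\Phi$ and then invoke the $l$-secure machinery. Since $y=\Phi x$ holds by construction, an adversary holding $(\Phi,y)$ cannot distinguish the true $x$ from any other point of the affine set $x+\ker\Phi$; hence a coordinate $x_i$ is recoverable if and only if $n_i=0$ for all $n\in\ker\Phi$, i.e.\ if and only if $e_i\in(\ker\Phi)^{\perp}=\operatorname{rowspace}(\Phi)$. So the whole theorem amounts to showing that, almost surely over the Gaussian draw, no standard basis vector $e_i$ ($1\le i\le m$) lies in $\operatorname{rowspace}(\Phi)$.

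First I would show that a $q\times m$ Gaussian matrix with $2q\le m+1$ is $(q-1)$-secure with probability one (in fact $(m-q)$-secure). For each fixed choice of $q$ columns, the determinant of that $q\times q$ submatrix is a polynomial in the i.i.d.\ Gaussian entries which is not identically zero, so it vanishes on a Lebesgue-null set; a union bound over the $\binom{m}{q}$ choices gives that a.s.\ every $q\times q$ minor of $\Phi$ is nonzero. Then deleting any $l\le m-q$ columns leaves a $q\times(m-l)$ matrix that still contains a nonsingular $q\times q$ block, hence has full row rank $q$, which is exactly $(m-q)$-security; and $2q\le m+1$ is the bookkeeping identity $q-1\le m-q$, so $\Phi$ is in particular $(q-1)$-secure and $1$-secure.

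Then I would finish by contradiction using Lemma~\ref{lem:lsecure}. If $e_i\in\operatorname{rowspace}(\Phi)$ for some $i$, then since $e_i\neq 0$ we can write $e_i^{\top}=\lambda^{\top}\Phi$ with $\lambda\neq 0$; this is a single nonzero linear combination of the rows of $\Phi$, trivially a linearly independent family of size $l=1$. Applying Lemma~\ref{lem:lsecure} with $\Psi=\lambda^{\top}\Phi$ and the $1$-secure matrix $\Phi$, the system $y=\Psi x$ must involve at least two of the variables $x_1,\dots,x_m$ --- but $\Psi=e_i^{\top}$ involves only $x_i$, a contradiction. Hence $e_i\notin\operatorname{rowspace}(\Phi)$ for every $i$, and by the reduction above no element of $x$ can be solved exactly.

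I expect the only genuine work to be the almost-sure $(q-1)$-security of a Gaussian matrix: the ``a nonzero polynomial vanishes on a null set'' argument together with the check that $2q\le m+1$ is precisely the column budget that makes $(q-1)$-security (hence $1$-security) possible; the reduction to row-space membership and the Lemma~\ref{lem:lsecure} step are both short. One caveat I would flag in the write-up is that the claim needs $m>q$ (equivalently it should exclude the degenerate case $q=m=1$, where $\Phi$ is a.s.\ invertible and $x$ is fully recoverable); this is automatic in the intended regime $q\ll m$ of Definition~\ref{def:gaussian}.
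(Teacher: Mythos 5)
Your proof is correct and follows essentially the same route as the paper's: establish that a Gaussian $\Phi$ with $2q\le m+1$ is $l$-secure almost surely, then invoke Lemma~\ref{lem:lsecure} to rule out isolating any single coordinate. The differences are minor but favorable --- you actually prove the almost-sure nonvanishing of the $q\times q$ minors (which the paper only cites from Liu et al.), you make the reduction ``$x_i$ recoverable iff $e_i\in\operatorname{rowspace}(\Phi)$'' explicit and apply the lemma with $l=1$ rather than $l=q-1$, and your caveat about $q=m=1$ flags a genuine boundary case in which the theorem as stated fails.
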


The proof is given in Appendix \ref{app:lemlsecure} and \ref{app:lsecure}. As long as we select $q \leq (m+1)/2$, the privacy of aggregation matrix is protected in the sense that the attacker cannot identify the exact value of any elements in the original data.

Next, we consider the possibility to infer users' interaction history from the reconstructed aggregation matrix. Appendix \ref{app:nphard} demonstrates the NP-hardness of finding a subset of items that match the aggregated embeddings.

\textbf{Gradient transmission:} The gradient transmitted to server is perturbed by the ternary quantization scheme. The following theorem shows that the ternary quantization can achieve $(0, \frac{1}{r})$ differential privacy.

\begin{theorem}
\label{thm:quantdp}
The ternary quantization scheme given by Definition \ref{def:quantScheme} achieves $(0,\frac{1}{r})$-differential privacy for individual party's gradients in every iteration.
\end{theorem}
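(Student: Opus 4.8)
The plan is to verify the $(\epsilon,\delta)=(0,\tfrac1r)$ guarantee directly from Definition \ref{def:dp}, i.e. to show that for any two neighboring gradient vectors $x,y$ with $\|x-y\|_1\le 1$ and any measurable $S\subseteq\mathrm{Range}(Q)$ we have $P(Q(x)\in S)\le P(Q(y)\in S)+\tfrac1r$. Since $\epsilon=0$, the multiplicative factor $e^\epsilon=1$ drops out, so the entire statement reduces to a bound on the total variation distance: it suffices to prove $\mathrm{TV}\bigl(Q(x),Q(y)\bigr)\le\tfrac1r$, because $P(Q(x)\in S)-P(Q(y)\in S)\le \mathrm{TV}(Q(x),Q(y))$ for every $S$. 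This is the cleanest route and avoids having to reason about the geometry of the output set $S$ at all.

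First I would set up the coordinatewise structure. The scheme acts independently on each coordinate: $q_i=r\,\mathrm{sign}(x_i)\,b_i$ with $b_i\sim\mathrm{Bernoulli}(|x_i|/r)$ independently across $i$. Hence $Q(x)=\bigotimes_{i=1}^d \mu_i^x$ is a product distribution, where $\mu_i^x$ is the law of the $i$-th output: it places mass $|x_i|/r$ on the point $r\,\mathrm{sign}(x_i)$ and mass $1-|x_i|/r$ on $0$. The standard subadditivity of total variation over product measures gives $\mathrm{TV}(Q(x),Q(y))\le\sum_{i=1}^d \mathrm{TV}(\mu_i^x,\mu_i^y)$. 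So the problem is reduced to controlling each one-dimensional term by something that sums to $\tfrac1r$.

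Next I would bound $\mathrm{TV}(\mu_i^x,\mu_i^y)$ by $|x_i-y_i|/r$ for each $i$. This is the only place where a small case analysis is needed, according to whether $x_i$ and $y_i$ have the same sign or opposite signs (with zero handled as either). If they share a sign, $\mu_i^x$ and $\mu_i^y$ are supported on the same two atoms $\{0,\pm r\}$ and the TV distance equals $\bigl||x_i|/r-|y_i|/r\bigr| = \bigl||x_i|-|y_i|\bigr|/r\le|x_i-y_i|/r$. If they have opposite signs, the distributions are supported on $\{-r,0,r\}$ and a direct computation of $\tfrac12$ the $\ell_1$ distance between the two three-atom vectors gives $(|x_i|+|y_i|)/r = |x_i-y_i|/r$ (using $\mathrm{sign}(x_i)\ne\mathrm{sign}(y_i)$). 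Either way $\mathrm{TV}(\mu_i^x,\mu_i^y)\le|x_i-y_i|/r$. Summing over $i$ yields $\mathrm{TV}(Q(x),Q(y))\le\|x-y\|_1/r\le\tfrac1r$, which is exactly the $\delta=\tfrac1r$ bound, completing the proof. I would also note in passing that the hypothesis $\|x\|_\infty\le r$ is what makes each $|x_i|/r\le1$ a valid probability, so the scheme is well-defined.

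The main obstacle, such as it is, is the opposite-sign coordinate case: one must be careful that when $\mathrm{sign}(x_i)\neq\mathrm{sign}(y_i)$ the two output distributions have overlapping support only at $0$, so the TV distance picks up the full masses $|x_i|/r$ and $|y_i|/r$ on the $+r$ and $-r$ atoms rather than their difference — and then check that this still telescopes to $|x_i-y_i|/r$ rather than something larger. Everything else (product-measure subadditivity, the reduction $\epsilon=0\Rightarrow$ TV bound) is routine. One caveat worth flagging: this argument gives $(0,\tfrac1r)$-DP with respect to the $\ell_1$-neighborhood of Definition \ref{def:dp}; it does not by itself compose over $T$ iterations, which is presumably handled separately (e.g. via the partial-participation subsampling or a composition theorem) elsewhere in the paper.
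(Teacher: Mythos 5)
Your proof is correct, and it is genuinely different from what the paper offers: the paper's entire proof of Theorem~\ref{thm:quantdp} is a one-line deferral to Theorem~3 of \cite{wang2022quantization}, whereas you supply a self-contained argument. Your route --- observe that $\epsilon=0$ reduces the claim to the total-variation bound $\mathrm{TV}(Q(x),Q(y))\le 1/r$, exploit the product structure of the coordinatewise mechanism via subadditivity of TV, and bound each one-dimensional term by $|x_i-y_i|/r$ with a same-sign/opposite-sign case split --- is sound, and the opposite-sign case is indeed the only place requiring care. One small slip there: half the $\ell_1$ distance between the two three-atom laws is $\tfrac12\bigl(a+b+|a-b|\bigr)=\max(a,b)$ with $a=|x_i|/r$, $b=|y_i|/r$, so the TV equals $\max(|x_i|,|y_i|)/r$ rather than $(|x_i|+|y_i|)/r$; since $\max(|x_i|,|y_i|)\le |x_i|+|y_i|=|x_i-y_i|$ when the signs differ, your bound still holds, but your ``equals'' should be ``is at most.'' Your closing caveat about per-iteration (non-composed) guarantees matches the theorem's own phrasing. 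One limitation worth flagging: because your argument fundamentally bounds TV by $\|x-y\|_1/r$, it does not by itself deliver the paper's subsequent Remark that the guarantee survives replacing the $\ell_1$ norm in Definition~\ref{def:dp} by any $\ell_p$ norm with $p\ge 1$, since $\|x-y\|_p\le 1$ allows $\|x-y\|_1$ to be much larger than $1$; that stronger claim would need the cited external result or a different argument.
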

\begin{proof}
The privacy guarantee has been proved by \cite{wang2022quantization} in Theorem 3. 
\end{proof}
\begin{remark}
\label{rem:lpnorm}
The ternary quantization still achieves $(0,\frac{1}{r})$-differential privacy  when the $l_1$ norm in Definition \ref{def:dp} is replaced with any $l_p$ norm with $p\geq 1$ (see Appendix \ref{app:dpexplan}).
\end{remark}

\subsection{Utility Analysis}
The federated algorithm involves two sources of error: (i) random projection and reconstruction of aggregation matrix, and (ii) stochastic ternary quantization mechanism. We will discuss the concerns one at a time.

\textbf{Random projection and matrix reconstruction:} The reconstructed matrix $X_p^k,\ k\in[0, K-1],\ p\in[1, P-1]$ doesn't deviate much from the original matrix with the following bounded MSE.
\begin{theorem}
\label{thm:rpmse}
Let $\Phi$ be a random matrix defined in Definition \ref{def:gaussian}. For any $X\in \mathbb{R}^{N_u \times D}$,
\begin{equation}
    \mathbb{E}_{\Phi} [\|\Phi^T\Phi X - X\|_F^2 ]= \frac{(m+1)}{p} \|X\|_F^2
\end{equation}
where $\|\cdot\|$ represents the Frobenius norm of inner matrix.
\end{theorem}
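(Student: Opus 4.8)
The plan is to reduce the claim to a second-moment computation for the symmetric random matrix $A := \Phi^{T}\Phi - I_{m}$, where $m = N_u$ denotes the number of columns of $\Phi$ and $p = q$ its number of rows as in Definition~\ref{def:gaussian}. First I would note that $\Phi^{T}\Phi$ is an unbiased estimator of the identity: since the entries $\Phi_{ki}$ are iid $N(0,1/p)$, we have $\mathbb{E}[(\Phi^{T}\Phi)_{ij}] = \sum_{k=1}^{p}\mathbb{E}[\Phi_{ki}\Phi_{kj}]$, which is $1$ if $i=j$ and $0$ otherwise, so $\mathbb{E}[A]=0$. Since $A$ is symmetric and independent of the deterministic matrix $X$,
\begin{equation}
\mathbb{E}_{\Phi}\bigl[\,\|\Phi^{T}\Phi X - X\|_F^{2}\,\bigr] = \mathbb{E}_{\Phi}\bigl[\,\mathrm{tr}(X^{T}A^{2}X)\,\bigr] = \mathrm{tr}\bigl(X^{T}\,\mathbb{E}_{\Phi}[A^{2}]\,X\bigr),
\end{equation}
so the entire statement follows once we show $\mathbb{E}_{\Phi}[A^{2}] = \tfrac{m+1}{p}\,I_{m}$.

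To compute $\mathbb{E}_{\Phi}[A^{2}]$ I would use $A^{2} = (\Phi^{T}\Phi)^{2} - 2\,\Phi^{T}\Phi + I_{m}$ and expand the quartic entrywise: $(\Phi^{T}\Phi)^{2}_{ij} = \sum_{a,b,\ell}\Phi_{ai}\Phi_{a\ell}\Phi_{b\ell}\Phi_{bj}$. Taking expectations term by term and using independence of distinct entries together with the Gaussian moments $\mathbb{E}[\Phi_{ab}^{2}] = 1/p$ and $\mathbb{E}[\Phi_{ab}^{4}] = 3/p^{2}$, a short case analysis of the index coincidences shows that every term in the off-diagonal sum ($i\neq j$) has zero expectation, while for $i=j$ the surviving contributions come from $\ell=i$ (split further into $a=b$, giving the fourth moment $3/p^{2}$, and $a\neq b$, giving $1/p^{2}$) and from $\ell\neq i$ with $a=b$. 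Counting these yields $\mathbb{E}[(\Phi^{T}\Phi)^{2}_{ii}] = (p+m+1)/p$, hence $\mathbb{E}[A^{2}_{ii}] = (p+m+1)/p - 2 + 1 = (m+1)/p$ and $\mathbb{E}[A^{2}_{ij}] = 0$ for $i\neq j$. Substituting $\mathbb{E}_{\Phi}[A^{2}] = \tfrac{m+1}{p} I_{m}$ into the display above gives $\tfrac{m+1}{p}\,\mathrm{tr}(X^{T}X) = \tfrac{m+1}{p}\|X\|_F^{2}$, which is the claim.

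The only real obstacle is the index bookkeeping in the quartic expansion --- in particular, keeping track of which coincidence patterns are allowed and remembering that the $a=b,\ \ell=i$ case contributes the fourth moment $3/p^{2}$ rather than $(1/p)^{2}$, and counting the $a\neq b$, $\ell=i$ terms correctly. As a safeguard I would cross-check the answer by a rotational-invariance argument: for any orthogonal $m\times m$ matrix $O$ we have $\Phi O \overset{d}{=} \Phi$ (the rows of $\Phi$ are rotationally invariant Gaussians), so $O^{T}A^{2}O \overset{d}{=} A^{2}$, forcing $\mathbb{E}_{\Phi}[A^{2}]$ to commute with every orthogonal matrix and hence equal $c\,I_{m}$ with $c = \tfrac{1}{m}\mathbb{E}[\mathrm{tr}(A^{2})]$. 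Since $\mathrm{tr}(A^{2}) = \mathrm{tr}\bigl((\Phi^{T}\Phi)^{2}\bigr) - m = \mathrm{tr}\bigl((\Phi\Phi^{T})^{2}\bigr) - m$ and $\Phi\Phi^{T}$ is a $p\times p$ matrix whose squared Frobenius norm is straightforward to average using the same Gaussian moments, one obtains $\mathbb{E}[\mathrm{tr}(A^{2})] = m(m+1)/p$, so $c = (m+1)/p$, consistent with the direct computation.
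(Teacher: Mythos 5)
Your proof is correct and follows essentially the same route as the paper's: both reduce the claim to computing the entrywise expectations of $\Phi^{T}\Phi\Phi^{T}\Phi$ and $\Phi^{T}\Phi$ via Gaussian moment/index-coincidence bookkeeping (your $\mathbb{E}[(\Phi^{T}\Phi)^{2}_{ii}]=(p+m+1)/p$ matches the paper's $\frac{m+1}{p}+1$ diagonal entry), with your trace formulation being a cosmetic repackaging of the paper's column-by-column expansion. The rotational-invariance cross-check is a nice supplementary sanity check but does not change the substance of the argument.
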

Refer to Appendix \ref{app:rpmse} for the proof of Theorem \ref{thm:rpmse}.

\textbf{Ternary quantization mechanism:} In Appendix \ref{app:converge} we provide an convergence analysis for the gradient perturbed mechanism.

\subsection{Communication Analysis}
\label{comanalysis}
The communication cost is analyzed in terms of the total message size transferred between parties. We assume that the participation rate $\alpha$, the number of participating clients divided by that of total clients in an iteration, remains unchanged throughout the training. Suppose that each number in embedding matrix and public parameters requires $s_1$ bits, and that in quantized gradients requires $s_2$ bits, respectively.

Downloading public parameters requires to transfer $\mathcal{O}(\alpha pKD(D+N_u)Ts_1)$ bits. It takes $\mathcal{O}(\alpha pqDKs_1)$ bits for each party to communicate neighborhood aggregation matrix per iteration, which adds up to $\mathcal{O}(\alpha^2 p^2 qDKTs_1)$ in total. 

For gradient transmission, each party is expected to have $\mathcal{O}(|\xi|_1/r)$ nonzero entries in their upload matrix, where $|\xi|_1$ denotes the $l_1$ norm of public parameters. It takes $\mathcal{O}(\alpha p |\xi|_1 Ts_2/r)$ bits to upload quantized gradients from clients to the server,  

Summing up the above processes, the algorithm involves $\mathcal{O}(\alpha pT(KD(D+N_u)s_1+\alpha pqDKs_1+|\xi|_1 s_2/r))$ bits of communication cost.

\section{Experiment}
    
\subsection{Dataset and Experiment Settings}
\textbf{Dataset:} We use two benchmark datasets for recommendation, MovieLens-1M\footnote{https://grouplens.org/datasets/movielens/1m/} (ML-1M) and BookCrossing\footnote{http://www2.informatik.uni-freiburg.de/~cziegler/BX/}. For BookCrossing we randomly select $6000$ users and $3000$ items. The items are divided into non-overlapping groups to simulate the vertical federated setting.

\textbf{Implementation and Hyper-parameter Setting:} Appendix \ref{app:hyperpara} details the implementation and hyperparameters.

\subsection{Experiment Result}
\subsubsection{Comparison with Different Methods}
We compare our proposed method with several centralized and federated recommender system, including: matrix factorization (MF) \cite{koren2009matrix}, central implementation of GNN (CentralGNN), federated GNN with graph expansion (FedPerGNN) \cite{wu2022federated}, adaption of FedSage and FedSage+ to GNN-based recommender system \cite{zhang2021subgraph}. We implement the GNN-related methods using the three GNN frameworks introduced in section \ref{gnnoverview}. 

We compare the methods along four dimensions in table \ref{comp-word}: (i) high-order interaction: modeling of high-order connectivity with graph propagation; (ii) gradient protection: sending perturbed or encrypted gradients instead of raw gradients; (iii) cross graph neighborhood: usage of missing links across parties or subgraphs.

\begin{table*}[t]
\caption{Comparison of different approaches}
\label{comp-word}
\vskip 0.1in
\begin{center}
\begin{small}
\begin{tabular}{lcccccc}
\toprule
  & MF & CentralGNN & FedPerGNN & FedSage & FedSage+ & VerFedGNN \\
\midrule
High-order interaction  & $\times$ & $\surd$ & $\surd$ & $\surd$ & $\surd$ & $\surd$ \\
Gradient protection & $\times$ & $\times$ & $\surd$ & $\times$ & $\times$ & $\surd$\\
Cross graph neighborhood & $\times$ & $\times$ & $\surd$ & $\times$ & $\surd$ & $\surd$ \\
Data storage  & Central & Central & Local & Local & Local & Local \\
\bottomrule
\end{tabular}
\end{small}
\end{center}
\vskip -0.1in
\end{table*}

Table \ref{comp-perf} summarizes the performance in terms of RMSE for different methods. For VerFedGNN, we use privacy budget $\frac{1}{r}=\frac{1}{3}$, reduced dimension $q=N_u/5$, and participation rate $\alpha=1$. It can be observed that our proposed method achieves lower RMSE than other federated GNN algorithms in most scenarios, and clearly outperform MF by an average of $4.7\%$ and $18.7\%$ respectively for ML-1M an BookCrossing dataset. The RMSE in VerFedGNN slightly increases over the central implementation, with average percentage difference $\leq 1.8\%$.

\begin{table}[t]
\caption{Performance of different methods. The values denote the $mean \pm standard\ deviation$ of the performance.}
\label{comp-perf}
\vskip 0.1in
\begin{center}
\begin{small}
\begin{tabular}{llcc}
\toprule
  & Model & ML-1M & BookCrossing \\
\midrule
MF & MF& $0.9578$ \tiny$\pm 0.0016$\normalsize & $1.9972$ \tiny$\pm 0.0063$\normalsize \\
\midrule
\multirow{3}{*}{CentralGNN} &GCN & $0.9108$ \tiny$\pm 0.0007$\normalsize & $1.5820$ \tiny$\pm 0.0050$\normalsize  \\
 & GAT& $0.9062$ \tiny$\pm 0.0029$\normalsize & $1.5478$ \tiny$\pm 0.0071$\normalsize \\
 & GGNN&$0.9046$ \tiny$\pm 0.0045$\normalsize & $1.6562$ \tiny$\pm 0.0040$\normalsize \\
\midrule
\multirow{3}{*}{FedPerGNN} &GCN & $0.9282$ \tiny$\pm 0.0012$\normalsize & $1.6892$ \tiny$\pm 0.0068$\normalsize  \\
 & GAT& $0.9282$ \tiny$\pm 0.0017$\normalsize & $1.6256$ \tiny$\pm 0.0048$\normalsize \\
 & GGNN& $0.9236$ \tiny$\pm 0.0023$\normalsize & $1.6962$ \tiny$\pm 0.0050$\normalsize \\
\midrule
 \multirow{3}{*}{FedSage} &GCN & $0.9268$ \tiny$\pm 0.0012$\normalsize & $1.6916$ \tiny$\pm 0.0118$\normalsize  \\
 & GAT& $0.9242$ \tiny$\pm 0.0041$\normalsize & $1.6256$ \tiny$\pm 0.0048$\normalsize \\
 & GGNN& $0.9268$ \tiny$\pm 0.0008$\normalsize & $2.6596$ \tiny$\pm 0.0133$\normalsize \\
\midrule
  \multirow{3}{*}{FedSage+} &GCN & $0.9194$ \tiny$\pm 0.0041$\normalsize & $1.6335$ \tiny$\pm 0.0065$\normalsize  \\
 & GAT& $0.9146$ \tiny$\pm 0.0033$\normalsize & $1.6078$ \tiny$\pm 0.0039$\normalsize \\
 & GGNN& $0.9180$ \tiny$\pm 0.0002$\normalsize & $1.8788$ \tiny$\pm 0.0401$\normalsize \\
\midrule
   \multirow{3}{*}{VerFedGNN} &GCN & $\boldsymbol{0.9152}$ \tiny$\boldsymbol{\pm 0.0013}$\normalsize & $\boldsymbol{1.5906}$ \tiny$\boldsymbol{\pm 0.0030}$\normalsize  \\
 & GAT& $\boldsymbol{0.9146}$ \tiny$\boldsymbol{\pm 0.0010}$\normalsize & $\boldsymbol{1.5830}$ \tiny$\boldsymbol{\pm 0.0131}$\normalsize \\
 & GGNN& $\boldsymbol{0.9076}$ \tiny$\boldsymbol{\pm 0.0024}$\normalsize & $\boldsymbol{1.6962}$ \tiny$\boldsymbol{\pm 0.0050}$\normalsize \\
\bottomrule
\end{tabular}
\end{small}
\end{center}
\vskip -0.1in
\end{table}

\subsubsection{Hyper-parameter Studies}
We use GCN as an example to study the impact of hyper-parameters on the performace of VerFedGNN.

\textbf{Participation rate:} The participation rate is changed from $0.2$ to $1$, with results presented in figure \ref{pr}. Using GCN model, the percentage differences over the fully participation case are within $0.15\%$ for ML-1M and $0.7\%$ for BookCrossing when $\alpha$ reaches to $0.5$. The other two models gives RMSE $\leq 0.92$ for ML-1M and $\leq 1.2$ for BookCrossing when $\alpha > 0.5$.
\begin{figure}[ht]
\vskip 0.1in
\begin{center}
\begin{subfigure}[ML-1M]{\includegraphics[width=0.48\columnwidth]{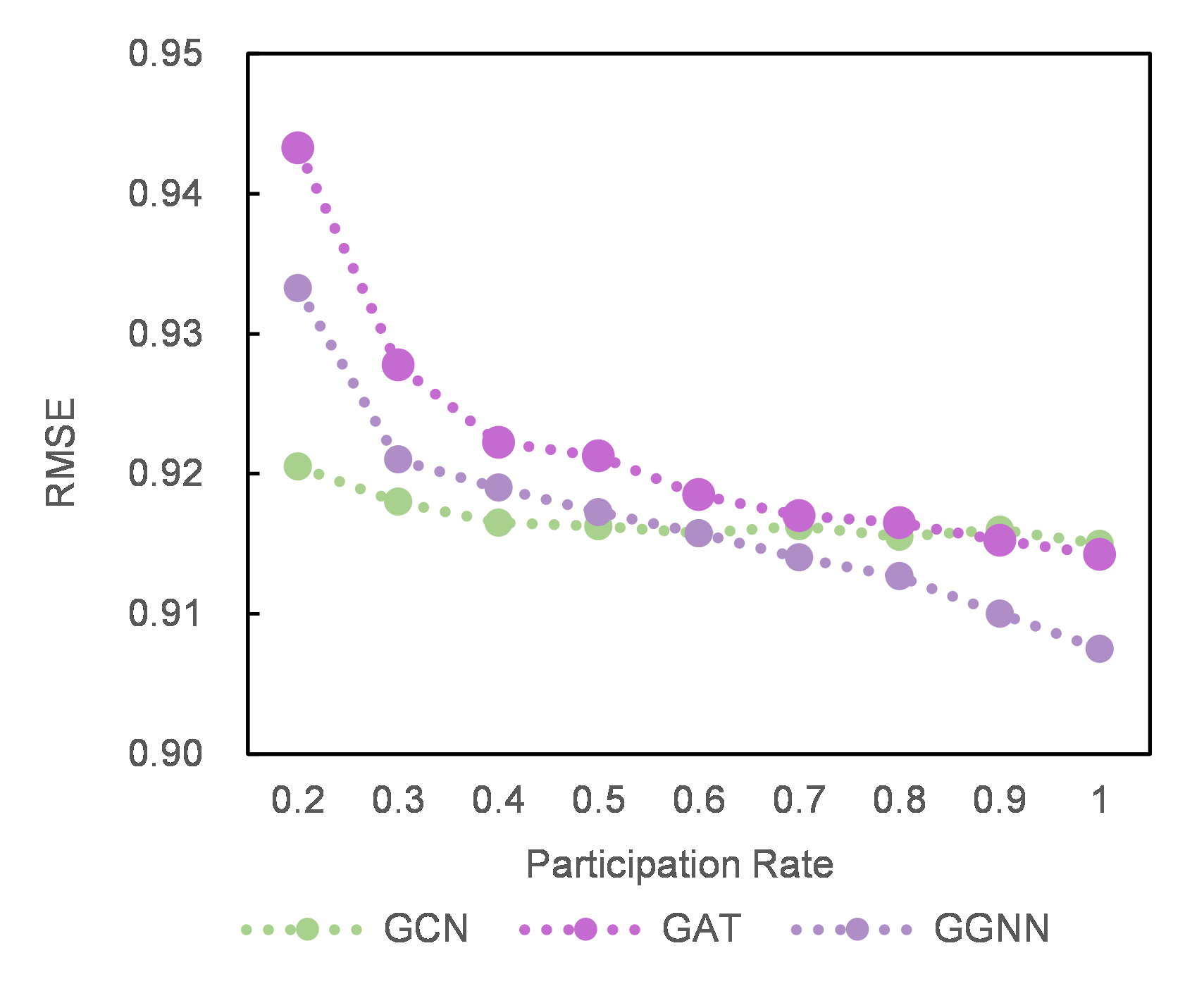}}
\end{subfigure}%~
\begin{subfigure}[BookCrossing]{\includegraphics[width=0.48\columnwidth]{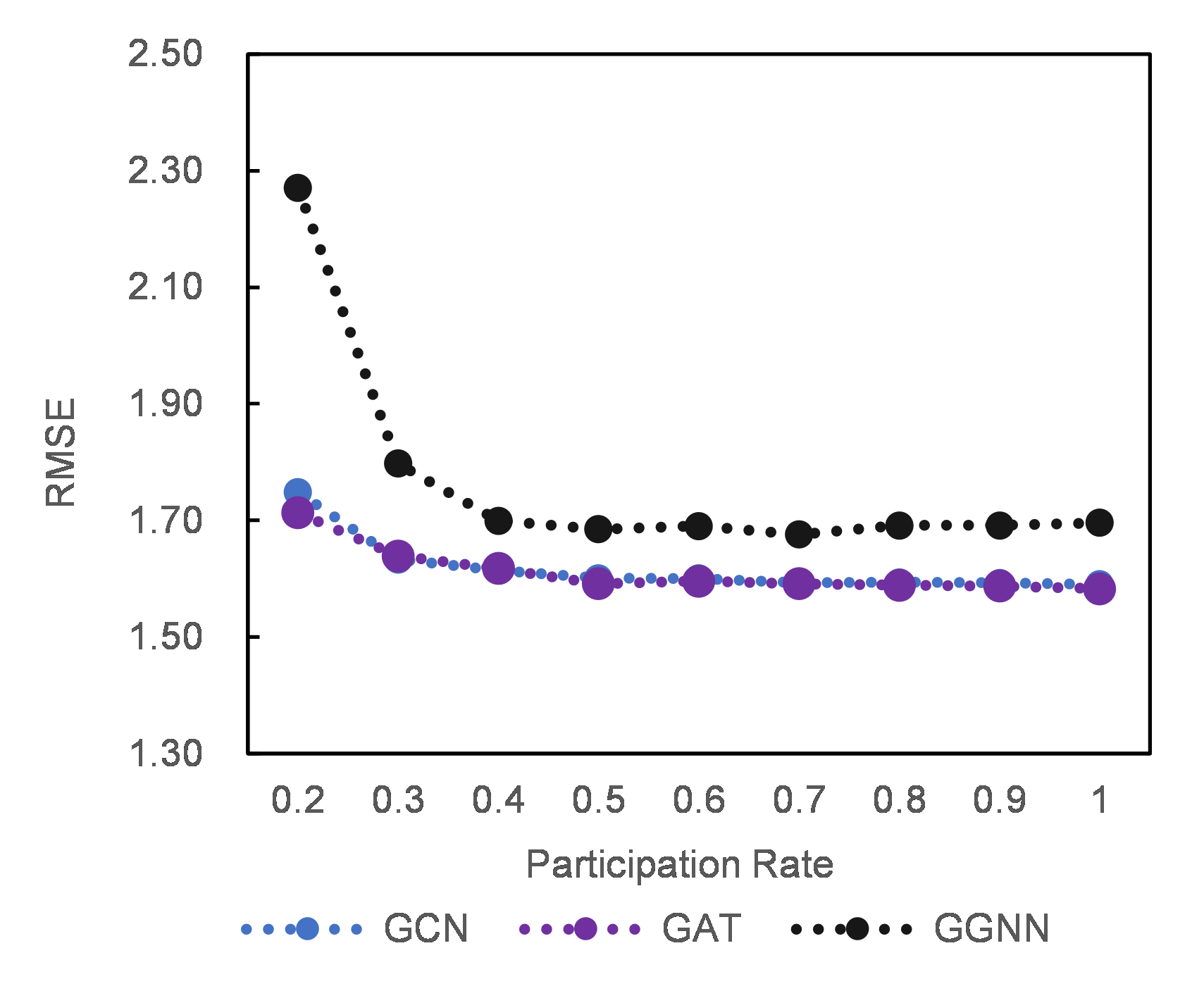}}
\end{subfigure}
\caption{RMSE with varying participation rate $\alpha$.}
\label{pr}
\end{center}
\vskip -0.1in
\end{figure}

\textbf{Privacy budget:} A smaller the privacy budget $\frac{1}{r}$ suggests that the transmitted gradients leak less user information. Figure \ref{privacyRMSE} presents the tradeoff between privacy budget $\frac{1}{r}$ and model performance. GGNN model is most sensitive to the change in privacy budget, while GAT model remains effective against the increase in $\frac{1}{r}$.
\begin{figure}[ht]
\vskip 0.1in
\begin{center}
\begin{subfigure}[ML-1M]{\includegraphics[width=0.48\columnwidth]{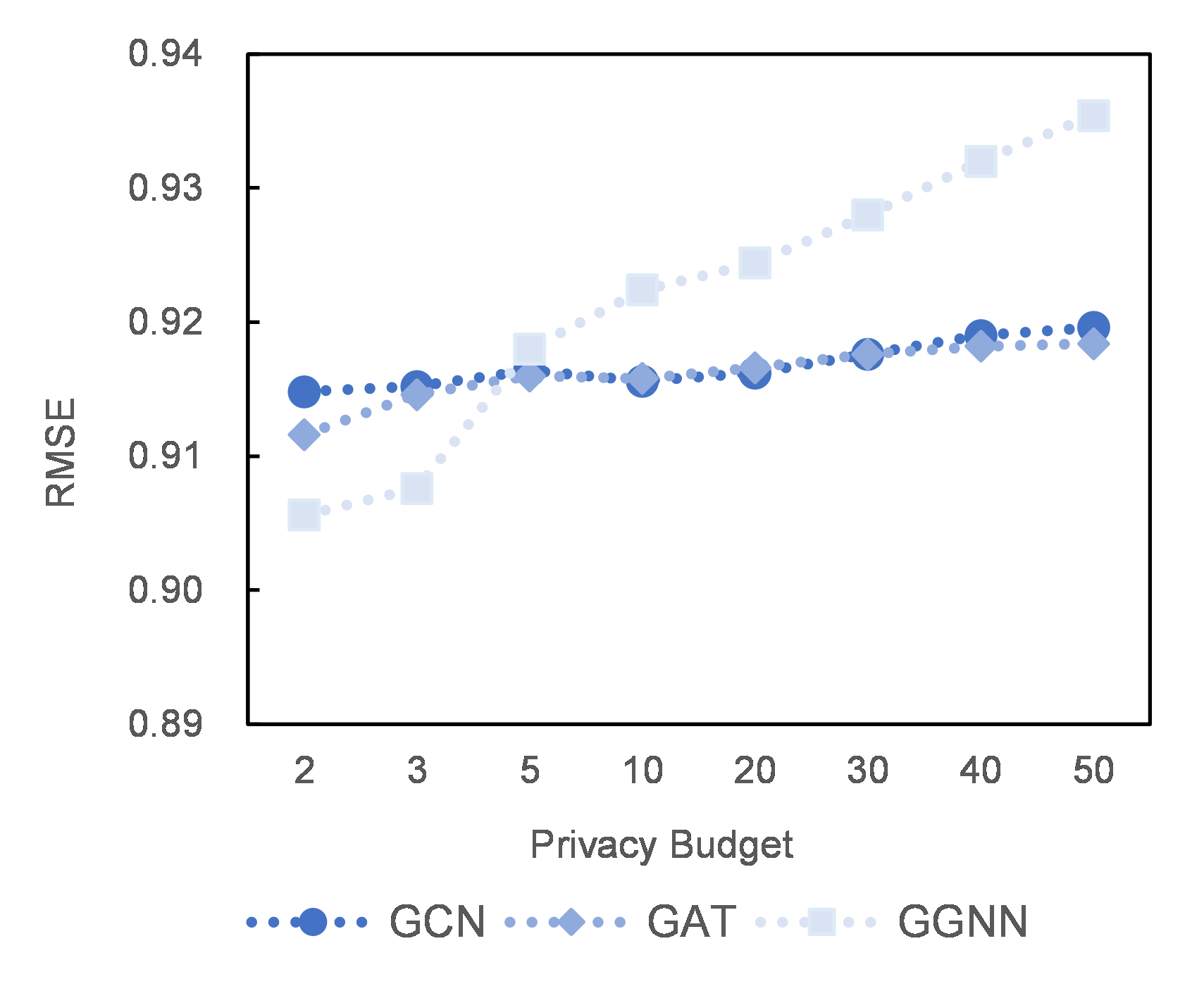}}
\end{subfigure}%~
\begin{subfigure}[BookCrossing]{\includegraphics[width=0.48\columnwidth]{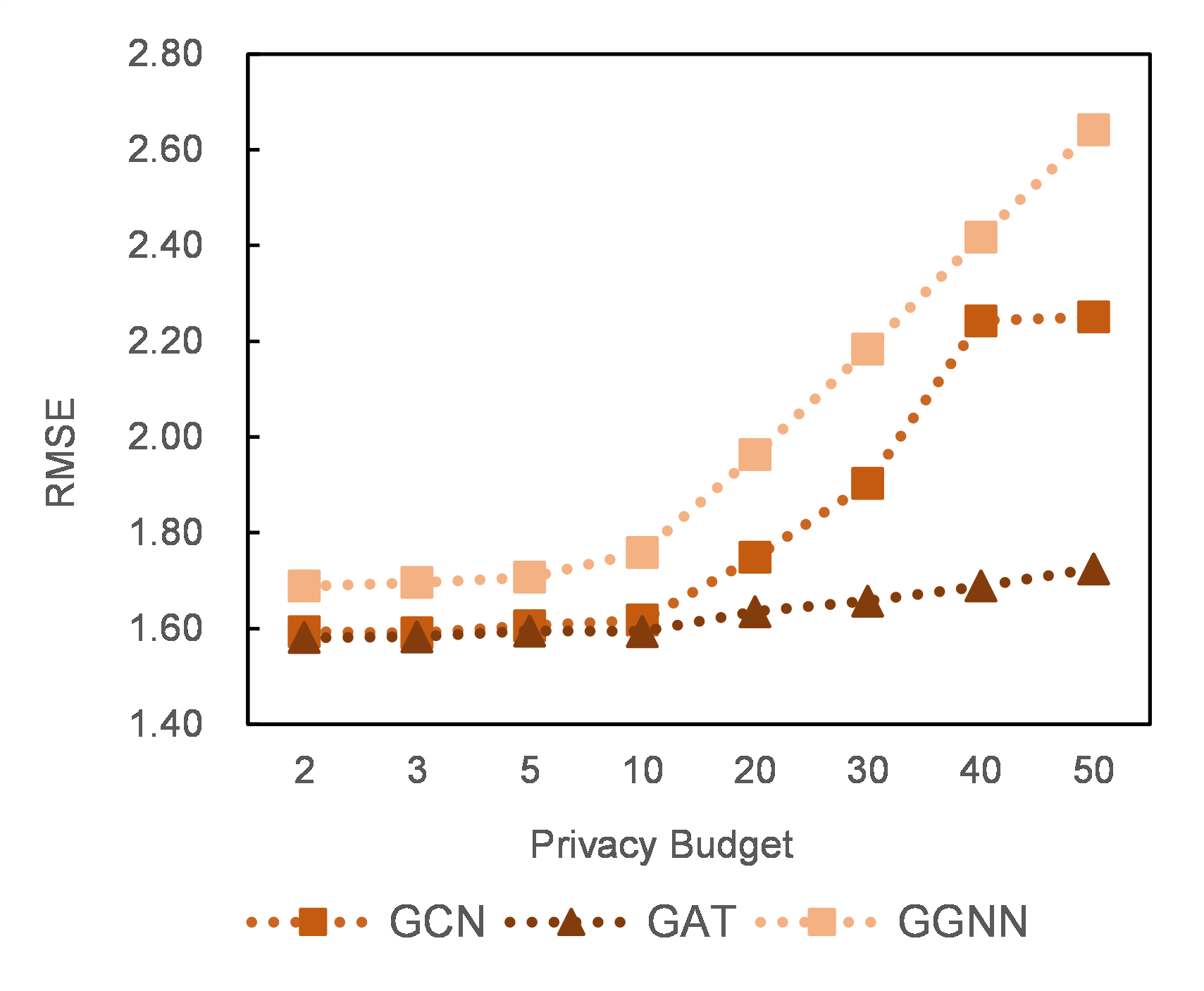}}
\end{subfigure}
\caption{RMSE with varying privacy budget $\frac{1}{r}$.}
\label{privacyRMSE}
\end{center}
\vskip -0.1in
\end{figure}

\textbf{Dimension Reduction:} We further analyze the effectiveness of our model with varying dimension reduction ratio $q/N_u$. As is shown in figure \ref{fig:dimRMSE}, GCN and GAT are more robust to the change in neighborhood dimension $q$, with error increase by $0.5\%$ for ML-1M and $1.5\%$ for BookCrossing when $N_u/q$ increases to $100$. 
\begin{figure}[ht]
\vskip 0.1in
\begin{center}
\begin{subfigure}[ML-1M]{\includegraphics[width=0.48\columnwidth]{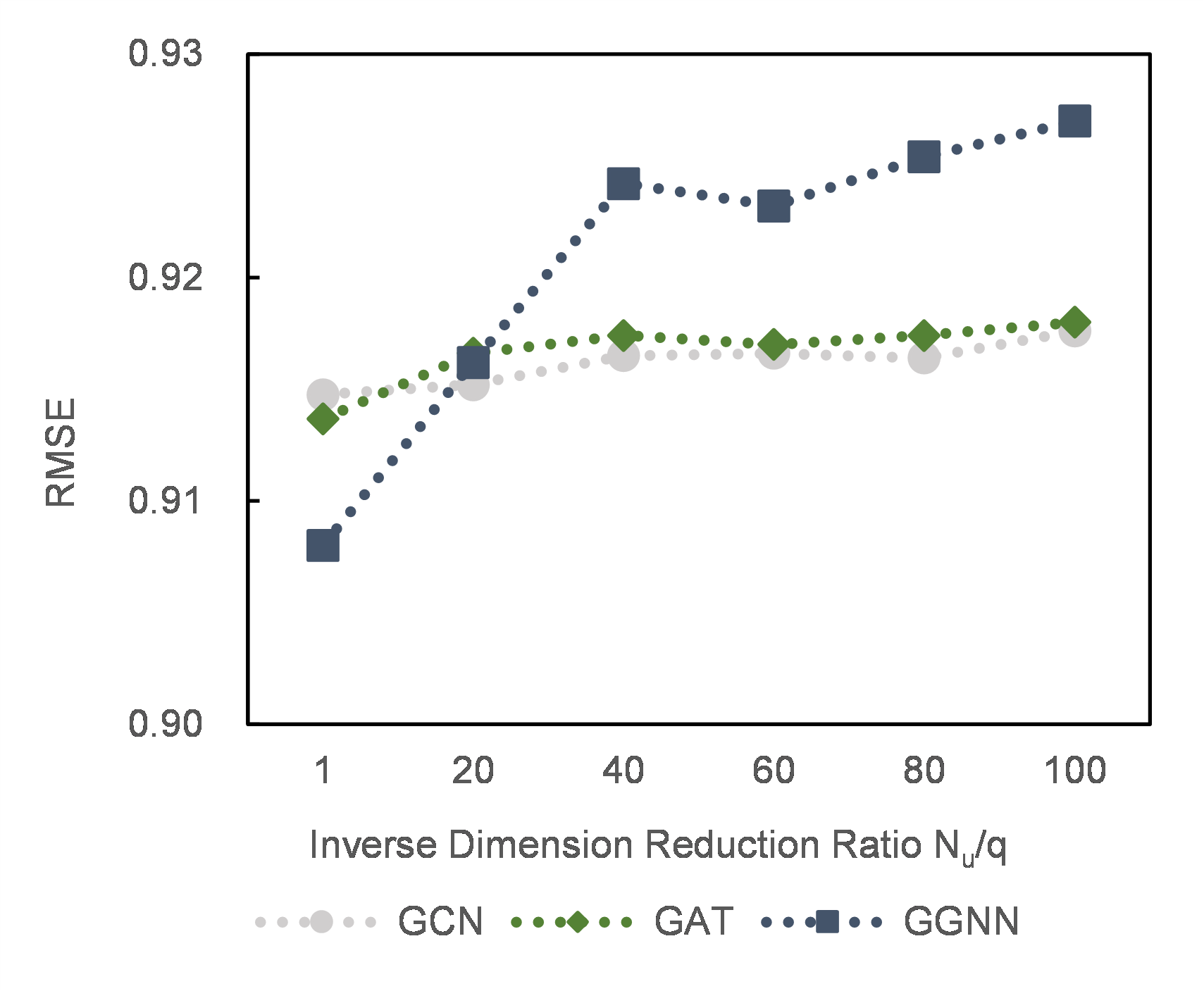}}
\end{subfigure}%~
\begin{subfigure}[BookCrossing]{\includegraphics[width=0.48\columnwidth]{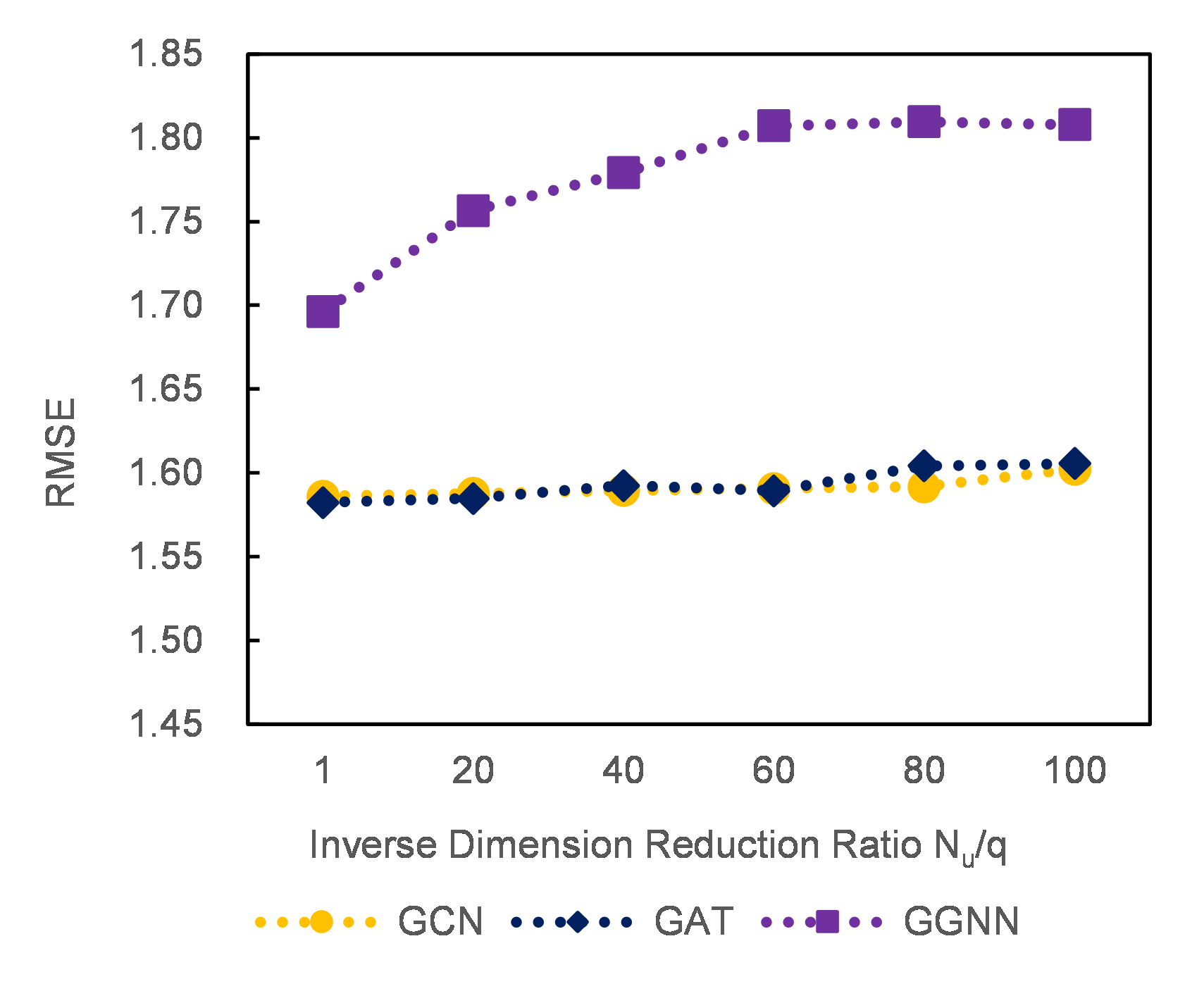}}
\end{subfigure}
\caption{RMSE by inverse dimension reduction ratio $N_u/q$.}
\label{fig:dimRMSE}
\end{center}
\vskip -0.1in
\end{figure}

\subsubsection{De-anonymization Attack}
To verify the effectiveness of our model against de-anonymization attack, we simulate this attack to compare the inference accuracy of VerFedGNN with the other two methods using cross graph neighborhood: FedPerGNN and FedSage+. For FedSage+, we match the generated embeddings of honest and adversarial users using equation \ref{eq:attack}. The attack for VerFedGNN utilized the recovered 
aggregation embeddings. Specifically, we find the subset of adversarial item embeddings leading to smallest $l_1$ distance with each users' neighborhood aggregation. Refer to appendix \ref{attackver} for more illustrations.

Table \ref{tbl:attck} reports the attack accuracy for the three federated algorithm using GCN model. The experiment is conducted under three cases regarding the proportion of items rated by the adversarial users $p_{ad}$. One important observation is that our algorithm greatly reduces the attack accuracy compared with the two baseline methods. FedPerGNN results in highest F1 and precision of $100\%$ as the attacker could match the embeddings exactly with adversarial users. 
\begin{table}[t]
\caption{Attack accuracy for three federated algorithms using GCN model on ML-1M.}
\label{tbl:attck}
\vskip 0.1in
\begin{center}
\begin{small}
\begin{tabular}{llccc}
\toprule
 $p_{ad}$ & Methods & Precision & Recall & F1 \\
\midrule
 \multirow{3}{*}{$0.2$} &FedPerGNN & $1.00$ \tiny$\pm 0.00$\normalsize & $0.21$ \tiny$\pm 0.01$\normalsize & $0.34$ \tiny$\pm 0.01$\normalsize\\
&FedSage+ & $0.14$ \tiny$\pm 0.00$\normalsize & $0.03$ \tiny$\pm 0.01$\normalsize & $0.05$ \tiny$\pm 0.01$\normalsize\\
&VerFedGNN & $0.01$ \tiny$\pm 0.00$\normalsize & $0.01$ \tiny$\pm 0.00$\normalsize & $0.01$ \tiny$\pm 0.00$\normalsize\\
\midrule
 \multirow{3}{*}{$0.5$}& FedPerGNN & $1.00$ \tiny$\pm 0.00$\normalsize & $0.49$\tiny$\pm 0.03$\normalsize & $0.66$ \tiny$\pm 0.02$\normalsize\\
& FedSage+ & $0.22$ \tiny$\pm 0.03$\normalsize & $0.08$ \tiny$\pm 0.00$\normalsize & $0.11$ \tiny$\pm 0.00$\normalsize\\
& VerFedGNN & $0.02$ \tiny$\pm 0.01$\normalsize & $0.01$ \tiny$\pm 0.00$\normalsize & $0.01$ \tiny$\pm 0.00$\normalsize\\
\midrule
 \multirow{3}{*}{$0.8$} & FedPerGNN & $1.00$ \tiny$\pm 0.00$\normalsize & $0.81$\tiny$\pm 0.01$\normalsize & $0.90$ \tiny$\pm 0.01$\normalsize\\
&FedSage+ & $0.26$ \tiny$\pm 0.02$\normalsize & $0.10$\tiny$\pm 0.01$\normalsize & $0.14$ \tiny$\pm 0.01$\normalsize\\
&VerFedGNN & $0.02$ \tiny$\pm 0.00$\normalsize & $0.01$ \tiny$\pm 0.00$\normalsize & $0.02$ \tiny$\pm 0.00$\normalsize\\
\bottomrule
\end{tabular}
\end{small}
\end{center}
\vskip -0.1in
\end{table}

\subsubsection{Communication Cost}
Figure \ref{fig:com} presents the communication cost measured in the size of bits to be transferred in each iteration. We find that the communication cost is nearly proportional to user size $N_u$ and participation rate $\alpha$. Besides, random projecting the neighborhood aggregation matrix with $q=\frac{1}{5}N_u$ saves the communication bits by $50.6\%$ with gradient quantization, and applying the quantization scheme reduces the communication cost by over $30\%$ when $N_u/q\geq 4$.

\begin{figure}[ht]
\vskip 0.1in
\begin{center}
\begin{subfigure}[User size]{\includegraphics[width=0.48\columnwidth]{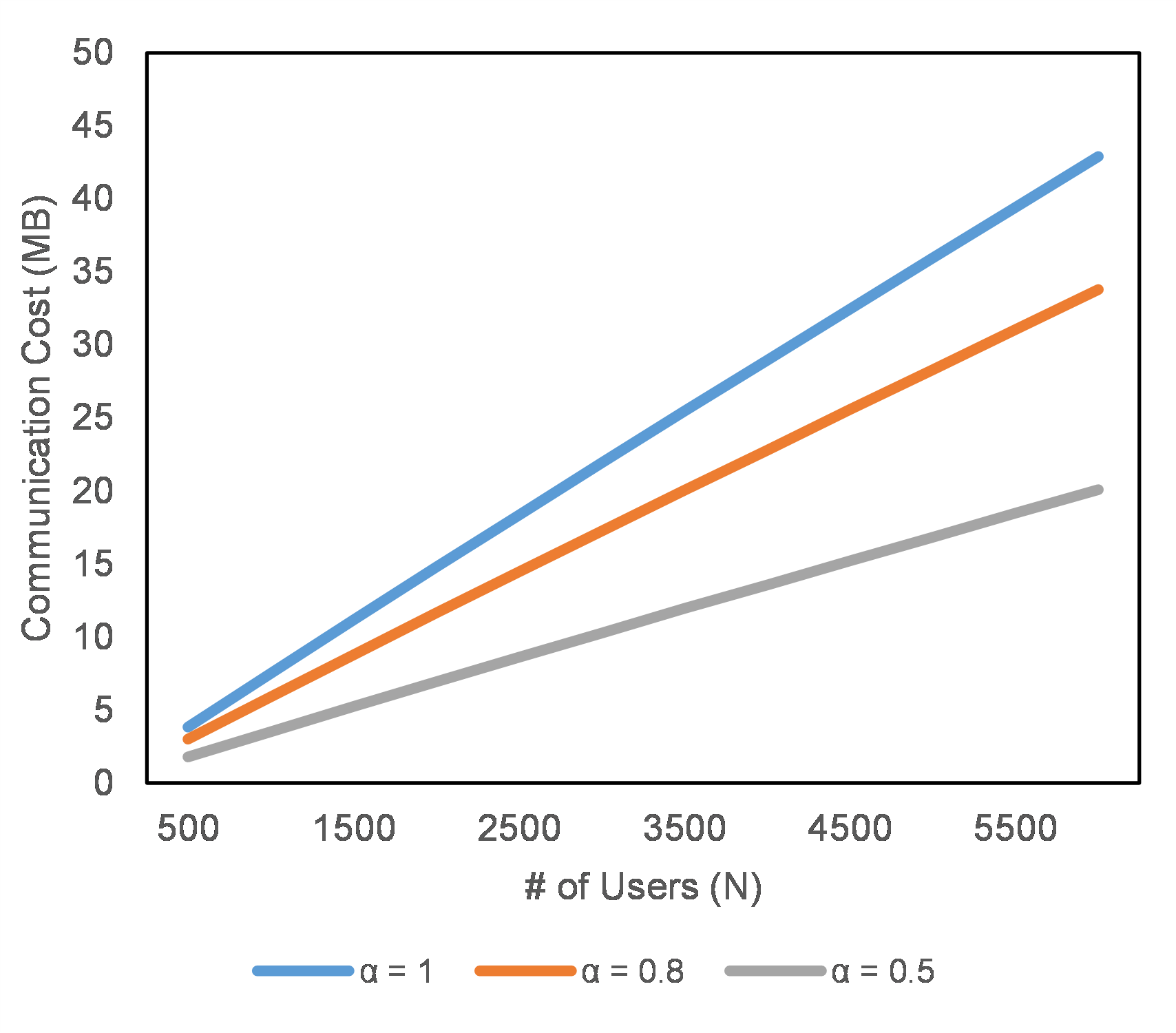}}
\end{subfigure}%~
\begin{subfigure}[Dimension]{\includegraphics[width=0.48\columnwidth]{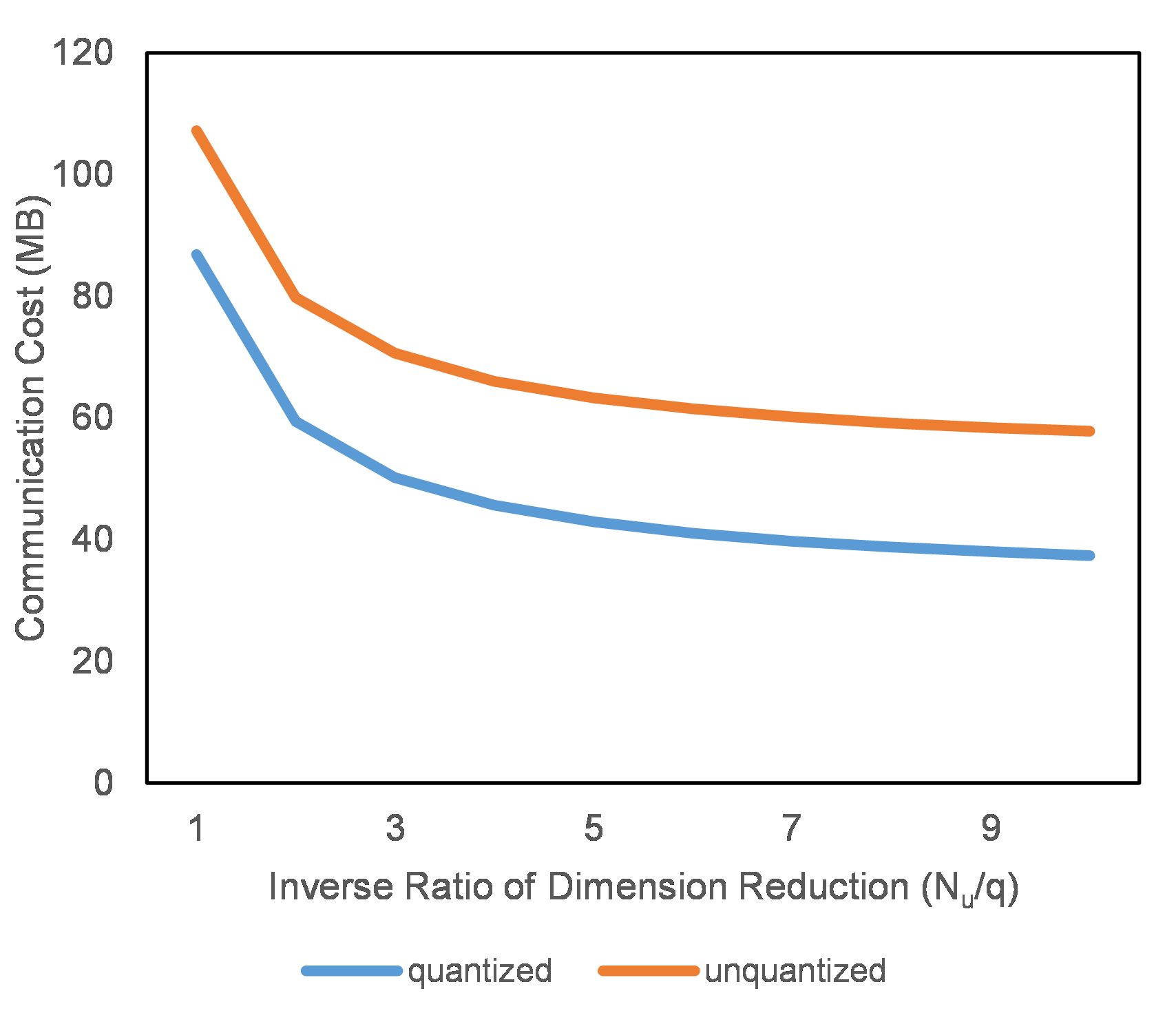}}
\end{subfigure}
\caption{Communication cost by user size and dimension for GCN}
\label{fig:com}
\end{center}
\vskip -0.1in
\end{figure}

\subsubsection{Other Studies}
For other studies, we simulate the de-anonymization attack against VerFedGNN under the case with and without dimension reduction, and evaluate the model performance when Laplace noise is employed in place of ternary quantization scheme (see Appendix \ref{app:expother}). 

\section{Conclusion}
This paper proposes VerFedGNN, a framework for GNN-based recommender systems in a vertical federated setting. The cross-graph interactions are transferred in form of neighborhood aggregation matrix perturbed by random projection. We adopt ternary quantization scheme to protect the privacy of public gradietns. Our approach could learn the relation information across different graphs while preserving users' interaction data. Empirical studies on two benchmark datasets show that: (1) VerFedGNN achieves comparative prediction performance with SOTA privacy preserving GNN models. (2) The neighborhood aggregation combined with random projection significantly reduces the attack accuracy compared with existing cross-graph propagation methods. (3) Optimizing dimension reduction ratio $N_u/q$ and participation rate $\alpha$ could lower the communication cost while maintaining accuracy. 

This work opens up new possibilities for the federated GCN-based recommendation. Firstly, it is interesting to develop a scalable federated framework with up to millions of users. Secondly, the framework could be extended to other federated scenarios, such as transfer federated recommender systems with few overlapping nodes \cite{yang2020federated}.

\bibliography{example_paper}
\bibliographystyle{icml2023}

%%%%%%%%%%%%%%%%%%%%%%%%%%%%%%%%%%%%%%%%%%%%%%%%%%%%%%%%%%%%%%%%%%%%%%%%%%%%%%%
%%%%%%%%%%%%%%%%%%%%%%%%%%%%%%%%%%%%%%%%%%%%%%%%%%%%%%%%%%%%%%%%%%%%%%%%%%%%%%%
% APPENDIX
%%%%%%%%%%%%%%%%%%%%%%%%%%%%%%%%%%%%%%%%%%%%%%%%%%%%%%%%%%%%%%%%%%%%%%%%%%%%%%%
%%%%%%%%%%%%%%%%%%%%%%%%%%%%%%%%%%%%%%%%%%%%%%%%%%%%%%%%%%%%%%%%%%%%%%%%%%%%%%%
\newpage
\appendix
\onecolumn
\section{Embedding Update with Neighborhood Aggregation}
\label{app:embupdate}
\subsection{Embedding Update for GCN}
Suppose the active client $c$ receive $E_p(n_u^k)$ from other parties $p \neq c$. The update function for user embedding is:
\begin{equation}
    e_u^{k+1} = \sigma(W^k (e_u^k + \sum_{p} E_p(n_u^k)))
\end{equation}
The update for item embedding is conducted locally:
\begin{equation}
\begin{gathered}
    Agg_k:\  n_v^k= \sum_{u\in \mathcal{N}(v)} \frac{1}{\sqrt{E_c(N_u)N_v}} e_u^k,\ \ 
    Update_k:\ e_v^{k+1} = \sigma(W^k (e_v^k + n_v^k))
\end{gathered}
\end{equation}
\subsection{Embedding Update for GAT}
Suppose the active client $c$ receive $E_p(n_u^k)$ from other parties $p \neq c$. The update function for user embedding is:
\begin{equation}
    e_u^{k+1} = \sigma(W^k (E_c(b_{uu}^k) e_u^k + \sum_{p} E_p(n_u^k)))
\end{equation}
The update for item embedding is conducted locally:
\begin{equation}
\begin{gathered}
    Agg_k:\  n_v^k= \sum_{u\in \mathcal{N}(v)} b_{vu}^k e_u^k,\ \ 
    Update_k:\ e_v^{k+1} = \sigma(W^k (b_{vv} e_v^k + n_v^k))
\end{gathered}
\end{equation}
where $b_{vv}^k$ and $b_{vu}^k$ are computed as:
\begin{equation}
\begin{gathered}
b_{vu}^k = \frac{exp(\Tilde{b}_{vu}^k)}{\sum_{u' \in \mathcal{N}(v) \cup v} exp(\Tilde{b}_{vu}^k)}
\end{gathered}
\end{equation}
where $b_{vu}^k=Att(e_v^k, e_u^k)$ is computed by an attention function.
\subsection{Embedding Update for GGNN}
Suppose the active client $c$ receive $E_p(n_u^k)$ from other parties $p \neq c$. The update function for user embedding is:
\begin{equation}
    e_u^{k+1} = GRU(e_u^k, \frac{1}{\sum_i M_i} \sum_p E_p(n_u^k) \cdot M_p)
\end{equation}
The update for item embedding is conducted locally:
\begin{equation}
\begin{gathered}
    Agg_k:\  n_v^k= \sum_{u\in \mathcal{N}(v)} \frac{1}{N_v} e_u^k,\ \ 
    Update_k:\ e_v^{k+1} = GRU(e_v^k, n_v^k)
\end{gathered}
\end{equation}

\section{Proof of Lemma \ref{lem:lsecure}}
\label{app:lemlsecure}
\begin{proof} 
The proof follows by a proper Gaussian elimination on the system of linear equations. See, for instance, Theorem 4.3 in \cite{du2004privacy}.
\end{proof}

\section{Proof of Theorem \ref{thm:lsecure}}
\label{app:lsecure}
\begin{proof} 
According to Theorem 4.4 in \cite{liu2005random}, when $2q-1\leq m$, a submatrix $\Phi_k$ formed by removing any $q-1$ columns from $\Phi$ has full rank with probability $1$, i.e., the linear system is $(q-1)$-secure with probability $1$. Hence, any nonzero linear combination of the row vectors in $\Phi$ contains at least $q-1$ nonzero elements. According to lemma \ref{lem:lsecure}, we cannot find $q-1$ linearly equations that solve these variables. Therefore, the solutions to any variable in $x$ are infinite. 
\end{proof}

\section{Interpretation of Theorem \ref{thm:quantdp}}
\label{app:dpexplan}
\subsection{Explanation of Remark \ref{rem:lpnorm}}
The original definition of differential privacy uses $l_1$ norm to denote the number of different records for two sets. In our study, the inputs are continuous and thus we use $l_p$-norm to measure the $l_p$-distance between two vectors. The neighboring datasets can be interpreted as vectors close to each other in terms of $l_p$-distance.

\subsection{Explanation of $(0, \frac{1}{r})$-DP}
The privacy budget is controlled by two parameters $\epsilon$ and $\delta$. The $(0,1/r)$-differential privacy is formulated as:
\begin{equation}
    P(M(x)\in S) \leq P(M(y)\in S) + \frac{1}{r}
\end{equation}
Therefore, the $(0, \frac{1}{r})$-DP suggests that the absolute difference of probability density at each point differs by at most $\frac{1}{r}$.

\subsection{Lower Bound of Reconstruction Attack Error}
We take $p=\infty$ as an example to derive a proof for the lower bound of reconstruction attack error for the $l_p$ norm.

\begin{theorem}
    Let $h=M(x)$ be the model output given input vector $x$, and $\hat{x}(h)$ be the reconstructed input on observing $h$. For a $(\epsilon, \delta)$-DP mechanism $M$ with $l_1$ replaced by $l_{\infty}$, the reconstruction error defined as mean square error (MSE) is lower bounded by:
    \begin{equation}
        \mathbb{E}[\|\hat{x}(h)-x\|_2^2] \geq O\left(\frac{\sum_i \theta_i^2}{e^{2\epsilon}+e^{\epsilon}\delta^2-1}\right)
    \end{equation}
    where $\theta_i=\inf_x |\partial \mu(x)_i/\partial x_i|$ and $\mu(x)=\mathbb{E}[\hat{x}(h)]$.
\end{theorem}
\begin{proof}
    The MSE is lower bounded by:
    \begin{equation}
        \mathbb{E}[\|\hat{x}(h)-x\|_2^2] \geq \sum_i Var\left(\hat{x}(h)_i\right)
    \end{equation}
    Then we examine the bound of $Var\left(\hat{x}(h)_i\right)$. From Hammersley-Chapman-Robbins Bound, we have:
    \begin{equation}
        Var\left(\hat{x}(h)_i\right) \geq \frac{(\mu(x+e_i)_i-\mu(x)_i)^2}{\mathbb{E}[(p(h;x+e_i)/p(h;x)-1)^2]} = \frac{(\mu(x+e_i)_i-\mu(x)_i)^2}{e^{2\epsilon}+e^{\epsilon}\mathbb{E}[2\delta/p(h;x)+\delta^2/p(h;x)^2]-1}
    \end{equation}
    where $\mathbb{E}[\cdot]$ is the expectation taken over $p(h;x)$, $p(h;x)$ is the density function of $h$ given $x$, and $e_i$ is the standard basis vector with ith coordinate equal to 1.

    Therefore, the MSE is lower bounded by:
    \begin{equation}
        \mathbb{E}[\|\hat{x}(h)-x\|_2^2] \geq O\left(\frac{\sum_i \theta_i^2}{e^{2\epsilon}+e^{\epsilon}\delta^2-1}\right)
    \end{equation}
\end{proof}
\begin{remark}
    For $\epsilon=0$, we have that:
    \begin{equation}
        \mathbb{E}[\|\hat{x}(h)-x\|_2^2] \geq O\left( \sum_i \theta_i^2 /\delta^2\right)
    \end{equation}
\end{remark}

\section{NP-hardness of Finding Missing Neighbors from Reconstructed Matrix}
\label{app:nphard}
\subsection{Missing Neighbors for GCN}
Let $E_p(\hat{n}_{u}^k)$ be the recovered neighborhood for user $u$ in party $p$. Denote $\hat{e}_{v_j}^k/\sqrt{E_p(N_u)}$ as the reconstructed embedding from adversarial user $j$ for $v_j\in \mathcal{N}_{adv}$. The attacker should find the subset $S\in \mathcal{N}_{adv}$ such that $\sum_{v_j\in S} \hat{e}_{v_j}^k/\left(\sqrt{E_p(N_u)}\sqrt{|S|}\right) = E_p(\hat{n}_{u}^k)$. The inference attack for GGNN can be summarized as a multi-dimensional subset squared-root average problem.
\begin{problem}[k-dimensional Subset Squared-root Average (kD-SSA)]
    Input: a set of vectors $S=\{n^i|1\leq i \leq n\}\subset \mathbb{Z}^k$, for $k\geq 1$, and a target vector $t\in \mathbb{Z}^k$. Output: YES if there exists a subset $S' \subseteq S$ such that $\sum_{i\in S'} n^i /\sqrt{|S'|} = t$, and NO otherwise.
\end{problem}
We claim the NP-hardness of the problem.
\begin{theorem}
    \label{lem:kdssa}
    The kD-SSA problem is NP-complete.
\end{theorem}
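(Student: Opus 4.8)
The claim splits into membership in NP and NP-hardness. For membership, a certificate is a nonempty subset $S'\subseteq S$, and the verifier checks $\sum_{i\in S'} n^i = t\sqrt{|S'|}$ coordinatewise. Since all entries are integers, the equality in coordinate $j$ is equivalent to the integer identity $\big(\sum_{i\in S'} n^i_j\big)^2 = t_j^{\,2}\,|S'|$ together with $\operatorname{sign}\!\big(\sum_{i\in S'} n^i_j\big)=\operatorname{sign}(t_j)$: squaring clears the possibly irrational $\sqrt{|S'|}$ and the sign check recovers what squaring loses. All quantities have polynomial bit-length, so verification is polynomial and the problem lies in NP.

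For hardness I would reduce from \textsc{Subset Sum}: given positive integers $a_1,\dots,a_n$ and a positive target $b$, decide whether some subset of the $a_i$ sums to $b$. The one real difficulty is the factor $1/\sqrt{|S'|}$, whose denominator is the a priori unknown size of the chosen subset; the device is to devote a coordinate to pinning that size down. Set $L=\lceil\sqrt{n}\,\rceil$ and form a $2$-dimensional kD-SSA instance whose vectors are $n^i=(La_i,\,1)$ for $1\le i\le n$ together with $L^2$ padding vectors $(0,\,1)$, and whose target is $t=(b,\,L)$.

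In any feasible $S'$, the second coordinate yields $|S'|/\sqrt{|S'|}=\sqrt{|S'|}=L$, hence $|S'|=L^2$; the first coordinate then yields $\big(\sum_{i\in S'} La_i\big)/L = \sum_{i\in S'\cap\{1,\dots,n\}} a_i = b$, so the non-padding members of $S'$ pick out a subset of the $a_i$ with sum $b$. Conversely, a subset $T\subseteq\{1,\dots,n\}$ with $\sum_{i\in T}a_i=b$ has $|T|\le n\le L^2$, so adjoining $L^2-|T|$ padding vectors produces a feasible $S'$ of size exactly $L^2$. The instance has $n+L^2=O(n)$ vectors with magnitudes scaled up by only $L=O(\sqrt{n})$, so the reduction runs in polynomial time; combined with NP membership this gives NP-completeness, and appending zero coordinates to every vector and to $t$ shows that the hardness persists for every fixed dimension $k\ge 2$.

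I expect the only subtle point to be the ``forcing'' step --- arguing that no degenerate $S'$ (empty, or one that blends padding and genuine vectors in a way that defeats the square-root bookkeeping) can satisfy the constraints. This turns out to be painless here: every vector in the construction has second coordinate exactly $1$, so the second-coordinate equation by itself forces $|S'|=L^2>0$ on every feasible solution, leaving nothing delicate to exclude.
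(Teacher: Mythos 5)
Your proof is correct, but it takes a genuinely different route from the paper's. The paper proceeds in two stages: it first reduces kD-SS to a fixed-size variant M-kD-SS by padding with zero vectors, and then reduces M-kD-SS to kD-SSA by adjoining a single sentinel vector with entries $MaxN\cdot\sqrt{M+1}$ for a ``sufficiently large'' $MaxN$, arguing by magnitude comparison (condition \ref{eq:gcnlargeM}) that any feasible subset must contain the sentinel and have size exactly $M+1$. You instead reduce directly from one-dimensional \textsc{Subset Sum} and pin down $|S'|$ with a dedicated counting coordinate in which every vector has entry $1$, so the equation $\sqrt{|S'|}=L$ forces the subset size exactly rather than by an asymptotic separation argument; this eliminates both the intermediate problem and the delicate choice of $MaxN$ (whose lower bound in the paper involves denominators $|\sqrt{r_1}-\sqrt{r_2}|$ that must be bounded away from zero over all pairs of candidate sizes). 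You are also more careful than the paper on NP membership: the verification equation involves the possibly irrational $\sqrt{|S'|}$, and your squaring-plus-sign-check is the right way to make the verifier exact over the integers, a point the paper passes over. The trade-offs are minor: your construction establishes hardness only for dimension $k\ge 2$ (the paper's dimension-preserving chain covers $k=1$ as well, though $k\ge 2$ is all that matters for the embedding-dimension application), and both your padding vectors $(0,1)$ and the paper's added zero vectors implicitly treat $S$ as a multiset, a technicality either proof could repair by distinguishing the padding elements with an extra coordinate or index.
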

\begin{proof}
Prior literature showed that the k-dimensional Subset Sum (kD-SS) problem is NP-complete \cite{emiris2017approximating, Kolesnikov1997MultidimensionalSS}. 
\begin{problem}[k-dimensional Subset Sum (kD-SS)]
    Input: a set of vectors $S=\{n^i|1\leq i \leq n\}\subset \mathbb{Z}^k$, for $k\geq 1$, and a target vector $t\in \mathbb{Z}^k$. Output: YES if there exists a subset $S' \subseteq S$ such that $\sum_{i\in S'} n^i
    = t$, and NO otherwise.
\end{problem}
We start with the reduction of kD-SS to Size M kD-SS.
\begin{problem}[Size M k-dimensional Subset Sum (M-kD-SS)]
    Input: a set of vectors $S=\{n^i|1\leq i \leq n\}\subset \mathbb{Z}^k$, for $k\geq 1$, a target subset size $M$ and a target vector $t\in \mathbb{Z}^k$. Output: YES if there exists a subset $S' \subseteq S$ such that $\sum_{i\in S'} n^i
    = t$ and $|S'|=M$, and NO otherwise.
\end{problem}
It is clear that $\text{M-kD-SS}\in NP$ as we can verity that a subset equals $t$ and has size $M$ in polynomial time. Next we show the reduction of kD-SS to M-kD-SS.

Let $S_1=\{n^i|1\leq i \leq n\}\subset \mathbb{Z}^k$ and $t$ be the input to kD-SS. We form $S_2$ by adding $n$ zero-vectors in to $S_1$. Let $S_2$, $M=n$, and $t$ be the input to M-kD-SS. Let $S'_1$ be the solution to kD-SS, and $S'_2$ is constructed by adding $n-|S'_1|$ to $S'_1$. The reduction works clearly in polynomial time. 

Next, we claim that $S'_1\in \text{kD-SS}$ iif $S'_2\in \text{M-kD-SS}$, i.e., $S'_1$ is a solution to kD-SS iif $S'_2$ is a solution to M-kD-SS. 

$\Rightarrow$: If the elements in $S'_1$ sums up to $t$, the same is true for $S'_2$ that's of size $n$. Therefore, $S'_2$ is a solution to M-kD-SS. 

$\Leftarrow$: If $S'_2$ is a solution to M-kD-SS, then the solution to kD-SS $S'_1$ can be formed by removing the zero vectors in $S'_2$.

Now, we have demonstrated the NP-completeness of M-kD-SS, and will return back to the kD-SSA problem. 

$\text{kD-SSA}\in NP$ as we can verity that a subset sum divided by its square-root size equals to $t$ in polynomial time. Next we show the reduction of M-kD-SS to kD-SSA. 

Let $S_1=\{n^i|1\leq i \leq n\}\subset \mathbb{Z}^k$, $M$ and $t$ be the input to M-kD-SS. We form $S_2$ by adding to $S_1$ a vector $v^M$ consisting of $MaxN \cdot \sqrt{M+1}$, such that:
\begin{equation}
\label{eq:gcnlargeM}
    MaxN \gg \frac{|t_j|+\sum_{i\in S_1}|n_j^i|}{|\sqrt{r_1}-\sqrt{r_2}|}, \forall r_1\neq r_2, r_1, r_2 \in [1, |S_2|], \forall j\in [1, k]
\end{equation}
$MaxN$ is much larger than any subset sum of absolute values in $S_1\cup \{t\}$. Let $S_2$, and $t/\sqrt{M+1}+MaxN$ be the input to kD-SA. Let $S'_1$ be the solution to M-kD-SS, and $S'_2$ is constructed by adding vector $v^M$ to $S'_1$. The reduction works clearly in polynomial time. 

We proceed to claim that $S'_1\in \text{M-kD-SS}$ iif $S'_2\in \text{kD-SSA}$, i.e., $S'_1$ is a solution to M-kD-SS iif $S'_2$ is a solution to kD-SSA. 

$\Rightarrow$: If $S'_1$ is a solution to M-kD-SS, then the square-root average of $S'_2$ is given by:
\begin{equation}
    \left(\sum_{i\in S'_1} n_j^i + v_j^M\right) \cdot \frac{1}{\sqrt{M+1}} = \frac{t_j}{\sqrt{M+1}}+MaxN
\end{equation}
for $1\leq j\leq k$. Thus $S'_2$ is a solution to kD-SA.

$\Leftarrow$: Let $S'_2$ be a solution to kD-SA. Suppose $v^M\notin S'_2$, then:
\begin{equation}
    \left(\sum_{i\in S'_2} n_j^i\right) \cdot \frac{1}{\sqrt{|S'_2|}} = \frac{t_j}{\sqrt{M+1}}+MaxN
\end{equation}
for $1\leq j\leq k$. The equation doesn't hold since $\frac{t_j}{M+1}+MaxN$ should be much larger than the sum of any subsets of $S_2\backslash v^M$. The argument shows that $v^M\in S'_2$, giving the following expression:
\begin{equation}
\begin{gathered}
    \left(MaxN \cdot \sqrt{M+1} + \sum_{i\in S'_2\backslash v^M} n_j^i\right) \cdot \frac{1}{\sqrt{|S'_2|}} = \frac{t_j}{\sqrt{M+1}}+MaxN\\
    \Updownarrow\\
    MaxN \left(\sqrt{M+1}-\sqrt{|S'_2|}\right) = \frac{t_j |S'_2|}{\sqrt{M+1}} - \sum_{i\in S'_2\backslash v^M} n_j^i
\end{gathered}
\end{equation}
for $1\leq j\leq k$. Suppose that $|S'_2|$ is not of size $M+1$, then the equation doesn't hold given that $MaxN$ should be a very large number. Therefore, we prove by contradiction that $S'_2$ is of size $M+1$ and includes $v^M$. Then the $S'_1$ formed by removing $v^M$ should be a solution to M-kD-SS.
\end{proof}

\subsection{Missing Neighbors for GAT}
Let $E_p(\hat{n}_{u}^k)$ be the recovered neighborhood for user $u$ in party $p$. Denote $b_{uv}^k \hat{e}_{v_j}^k$ as the reconstructed embedding from adversarial user $j$ for $v_j\in \mathcal{N}_{adv}$. One challenge to the inference attack is the unknown coefficient $b_{uv}^k$. We consider a simpler problem where $b_{uv}^k$ is know in advance for every pair of nodes, and show that even the simplified version belongs to NP-complete.

Given the coefficient $b_{uv}^k$, the attacker can compute $\hat{e}_{v_j}^k$ for $v_j\in \mathcal{N}_{adv}$. Then it should find the subset $S\in \mathcal{N}_{adv}$ such that $\sum_{v_j\in S} b_{u v_j}^k \hat{e}_{v_j}^k = E_p(\hat{n}_{u}^k)$. This is essentially the k-dimensional Subset Sum (kD-SS) problem that belongs to NP-complete.

\subsection{Missing Neighbors for GGNN}
 Let $E_p(\hat{n}_{u}^k)$ be the recovered neighborhood for user $u$ in party $p$. Denote $\hat{e}_{v_j}^k$ as the reconstructed embedding from adversarial user $j$ for $v_j\in \mathcal{N}_{adv}$. The attacker should find the subset $S\in \mathcal{N}_{adv}$ such that $\sum_{v_j\in S} \hat{e}_{v_j}^k/|S| = E_p(\hat{n}_{u}^k)$. The inference attack for GGNN can be summarized as the multi-dimensional subset average problem.
\begin{problem}[k-dimensional Subset Average (kD-SA)]
    Input: a set of vectors $S=\{n^i|1\leq i \leq n\}\subset \mathbb{Z}^k$, for $k\geq 1$, and a target vector $t\in \mathbb{Z}^k$. Output: YES if there exists a subset $S' \subseteq S$ such that $\sum_{i\in S'} n^i /|S'| = t$, and NO otherwise.
\end{problem}
We aim to show the NP-hardness of the problem.
\begin{theorem}
    The kD-SA problem is NP-complete.
\end{theorem}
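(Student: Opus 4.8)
The plan is to reuse, almost verbatim, the argument just given for kD-SSA, with the square-root average $\sum n^i/\sqrt{|S'|}$ replaced by the ordinary average $\sum n^i/|S'|$; the ordinary-average version is in fact \emph{easier}, since two distinct integer cardinalities differ by at least $1$, so the delicate estimate with denominator $|\sqrt{r_1}-\sqrt{r_2}|$ appearing in \eqref{eq:gcnlargeM} collapses into a trivial one. First, kD-SA is in NP: a witness is a subset $S'$, and one verifies $\sum_{i\in S'}n^i=|S'|\cdot t$ in polynomial time. For NP-hardness I would reduce from M-kD-SS, which was proved NP-complete above — the essential difficulty is controlling the cardinality of the selected subset, and that is exactly what the ``huge vector'' gadget accomplishes.

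\textbf{The reduction.} Let $(S_1=\{n^i\}_{i=1}^n\subset\mathbb{Z}^k,\ M,\ t\in\mathbb{Z}^k)$ be an M-kD-SS instance. Pick an integer
\[
  \mathrm{MaxN}\;>\;\max_{1\le j\le k}\Big((n+1)\,|t_j|+(M+1)\sum_{i=1}^{n}|n^i_j|\Big),
\]
whose bit-length is polynomial in the input size. Form the kD-SA instance $(S_2,t')$ with $S_2=\{(M+1)\,n^i:1\le i\le n\}\cup\{v\}$, where $v\in\mathbb{Z}^k$ has every coordinate equal to $(M+1)\cdot\mathrm{MaxN}$, and $t'=t+\mathrm{MaxN}\cdot\mathbf{1}$. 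Scaling the original vectors by $M+1$ is the device that keeps $t'$ integral; it is the counterpart of, and slightly cleaner than, the rational target implicitly used in the kD-SSA reduction. The construction is clearly polynomial-time.

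\textbf{Correctness.} ($\Rightarrow$) If $S_1'\subseteq S_1$ with $|S_1'|=M$ and $\sum_{i\in S_1'}n^i=t$, then $S_2'=\{(M+1)n^i:i\in S_1'\}\cup\{v\}$ has size $M+1$ and $\sum_{w\in S_2'}w=(M+1)t+(M+1)\mathrm{MaxN}\,\mathbf{1}=(M+1)\,t'$, so its average is $t'$. ($\Leftarrow$) Let $S_2'$ solve kD-SA. If $v\notin S_2'$, then in some coordinate $j$ the average has absolute value at most $\sum_i(M+1)|n^i_j|<\mathrm{MaxN}-|t_j|\le|t'_j|$, contradicting that the average equals $t'$; hence $v\in S_2'$. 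Write $S_2'=\{v\}\cup C$ with $C\subseteq\{(M+1)n^i\}$ and $m=|C|$, and let $\sigma_j$ denote the $j$-th coordinate of $\frac{1}{M+1}\sum_{w\in C}w$. Comparing coordinates gives $\mathrm{MaxN}\,(M-m)=(m+1)\,t_j-(M+1)\,\sigma_j$, whose right-hand side has absolute value at most $(n+1)|t_j|+(M+1)\sum_i|n^i_j|<\mathrm{MaxN}$; since $M-m$ is an integer, this forces $m=M$, whence $\sigma_j=t_j$ for every $j$. Thus the index set of $C$ is a size-$M$ subset of $S_1$ that sums to $t$ — a solution of M-kD-SS. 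As M-kD-SS is NP-complete, so is kD-SA.

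\textbf{Main obstacle.} There is no genuine obstacle; the only thing requiring attention is the choice of $\mathrm{MaxN}$, which must simultaneously (i) dominate every coordinate of every partial sum of the $n^i$ so that $v$ cannot be omitted from a solution, and (ii) dominate the residual $(m+1)t_j-(M+1)\sigma_j$ so that the cardinality is forced to be exactly $M+1$, while $\log\mathrm{MaxN}$ remains polynomial. The single displayed bound does both, and more easily than in \cref{lem:kdssa}: because integer cardinalities are $1$-separated, requirement (ii) is immediate once the residual is shown to be $<\mathrm{MaxN}$, whereas the square-root version needed the extra factor $|\sqrt{r_1}-\sqrt{r_2}|$. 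Accordingly I would present this theorem as essentially a corollary of the kD-SSA analysis.
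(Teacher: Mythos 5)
Your proof is correct and follows essentially the same route as the paper's: membership in NP plus a reduction from M-kD-SS using a single huge padding vector whose magnitude forces both its own inclusion and the cardinality $M+1$. Your two refinements --- scaling the $n^i$ by $M+1$ so the target stays in $\mathbb{Z}^k$, and giving an explicit polynomial-size bound for $\mathrm{MaxN}$ in place of the paper's informal ``much larger than any subset sum'' --- are welcome tightenings but do not change the argument.
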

\begin{proof}
As we have demonstrated the NP-completeness of M-kD-SS in the proof of Theorem \ref{lem:kdssa}, we will show its reduction to the kD-SA problem. 

$\text{kD-SA}\in NP$ as we can verity that a subset averages to $t$ in polynomial time. Next we show the reduction of M-kD-SS to kD-SA. 

Let $S_1=\{n^i|1\leq i \leq n\}\subset \mathbb{Z}^k$, $M$ and $t$ be the input to M-kD-SS. We form $S_2$ by adding to $S_1$ a vector $v^M$ consisting of $MaxN \cdot (M+1)$, such that $MaxN$ is much larger than any subset sum of absolute values in $S_1\cup \{t\}$. Let $S_2$, and $t/(M+1)+MaxN$ be the input to kD-SA. Let $S'_1$ be the solution to M-kD-SS, and $S'_2$ is constructed by adding vector $v^M$ to $S'_1$. The reduction works clearly in polynomial time. 

We proceed to claim that $S'_1\in \text{M-kD-SS}$ iif $S'_2\in \text{kD-SA}$, i.e., $S'_1$ is a solution to M-kD-SS iif $S'_2$ is a solution to kD-SA. 

$\Rightarrow$: If $S'_1$ is a solution to M-kD-SS, then the average of $S'_2$ is given by:
\begin{equation}
    \left(\sum_{i\in S'_1} n_j^i + v_j^M\right) \cdot \frac{1}{M+1} = \frac{t_j}{M+1}+MaxN
\end{equation}
for $1\leq j\leq k$. Thus $S'_2$ is a solution to kD-SA.

$\Leftarrow$: Let $S'_2$ be a solution to kD-SA. Suppose $v^M\notin S'_2$, then:
\begin{equation}
    \left(\sum_{i\in S'_2} n_j^i\right) \cdot \frac{1}{|S'_2|} = \frac{t_j}{M+1}+MaxN
\end{equation}
for $1\leq j\leq k$. The equation doesn't hold since $\frac{t_j}{M+1}+MaxN$ should be much larger than any other elements in $S_2$. The argument shows that $v^M\in S'_2$, giving the following expression:
\begin{equation}
\begin{gathered}
    \left(MaxN \cdot (M+1) + \sum_{i\in S'_2\backslash v^M} n_j^i\right) \cdot \frac{1}{|S'_2|} = \frac{t_j}{M+1}+MaxN\\
    \Updownarrow\\
    MaxN \left(M+1-|S'_2|\right) = \frac{t_j |S'_2|}{M+1} - \sum_{i\in S'_2\backslash v^M} n_j^i
\end{gathered}
\end{equation}
for $1\leq j\leq k$. Suppose that $|S'_2|$ is not of size $M+1$, then the equation doesn't hold since $MaxN \gg |\frac{t_j |S'_2|}{M+1} - \sum_{i\in S'_2\backslash v^M} n_j^i|$. Therefore, we prove by contradiction that $S'_2$ is of size $M+1$ and includes $v^M$. Then the $S'_1$ formed by removing $v^M$ should be a solution to M-kD-SS.
\end{proof}

\section{Proof of Theorem \ref{thm:rpmse}}
\label{app:rpmse}
\begin{proof}
The reconstruction MSE can be written as:
\begin{equation}
\begin{gathered}
    \mathbb{E}_{\Phi}[\|\Phi^T\Phi X - X\|_F^2] = \mathbb{E}_{\Phi}[\sum_i \|\Phi^T\Phi X_i - X_i\|_F^2] =\\
    \sum_i (X_i^T \mathbb{E}_{\Phi} [\Phi^T\Phi\Phi^T\Phi] X_i-2X_i^T E[\Phi^T\Phi] X_i+X_i^T X_i)
\end{gathered}
\end{equation}
where $X_i$ denote the $i^{th}$ column of X.

Then we can compute the expectation of the random matrix. Let $A=\Phi^T\Phi\Phi^T\Phi$, and $B=\Phi^T\Phi$. The expectation of elements in $A$ is:
\begin{equation}
\begin{gathered}
    \mathbb{E}(A_{ij})=\left\{  
\begin{array}{ll}  
\frac{m+1}{p}+1,& i=j\\  
0 ,& i\neq j\\  
\end{array}  
\right.
\end{gathered}
\end{equation}
The expectation of elements in $B$ is:
\begin{equation}
\begin{gathered}
    \mathbb{E}(B_{ij})=\left\{  
\begin{array}{lr}  
1, & i=j\\  
0, & i\neq j\\  
\end{array}  
\right.
\end{gathered}
\end{equation}
Plug in the expectation of $A$ and $B$, the MSE is computed as:
\begin{equation}
\begin{gathered}
    \mathbb{E}_{\Phi}[\|\Phi^T\Phi X - X\|_F^2] = \frac{(m+1)}{p} \|X\|_F^2
\end{gathered}
\end{equation}
\end{proof}

\section{Convergence Analysis of Ternary Quantization Mechanism}
\label{app:converge}
This section provides the convergence analysis of the ternary quantization scheme. The loss function in \ref{eq:loss} can be decomposed as:
\begin{equation}
    \mathcal{L} = f(x) = \sum_p \underbrace{\left(\sum_{(u,v)\in \mathcal{U}\times 
\mathcal{V}_p} (\hat{r}_{uv}-r_{uv})^2 + \frac{1}{M}\sum_{v\in \mathcal{V}_p} \|e_v^0\|^2\right)}_{\text{$f_p(x)$}} + \frac{1}{N}\sum_{u\in \mathcal{U}} \|e_u^0\|^2 = \sum_p f_p(x) + \frac{1}{N}\sum_{u\in \mathcal{U}} \|e_u^0\|^2
\end{equation}
The second term can be ignored since it can be computed at the server. Denote $g_p^t$ as the unbiased estimation of gradients using the raw aggregation matrix, $\Tilde{g}_p^t$ as the biased estimation of gradients using the projected matrix $\hat{X}^k$, $k\in[0, K-1]$, and $q(\Tilde{g}_p^t)$ as the gradients perturbed by ternary quantization. We start with some properties of the ternary scheme from Definition \ref{def:quantScheme}.
\begin{lemma}
    Under the ternary quantization scheme given by Definition \ref{def:quantScheme}, it holds that:
    \begin{equation}
    \label{ternarymeanvar}
        \mathbb{E}[q(\Tilde{g}_p^t)] = \Tilde{g}_p^t,\ \mathbb{E}[q(\Tilde{g}_p^t)-\Tilde{g}_p^t]\leq |\zeta_p|r^2,\ \ \forall p, t
    \end{equation}
    where $|\zeta_p|$ denotes the number of parameters.
\end{lemma}
\begin{proof}
The proof follows by directly computing $\mathbb{E}[q_i]$ and $\mathbb{E}[q(\Tilde{g}_p^t)-\Tilde{g}_p^t]$:
\begin{equation}
\begin{gathered}
    \mathbb{E}[q(\Tilde{g}_p^t)] = \text{sign}(\Tilde{g}_p^t) \cdot r \cdot \frac{|\Tilde{g}_p^t|}{r} = \Tilde{g}_p^t\\
    \mathbb{E}[q(\Tilde{g}_p^t)-\Tilde{g}_p^t] = \sum_i r |x_i| \leq \sum_i r^2 = |\zeta_p|r^2
\end{gathered}
\end{equation}
\end{proof}
We further need the following assumptions for the proof.
\begin{assumption}
\label{ass:converge}
(1) Each party $p$ has Lipschitz continuous function $f_p(\cdot)$ with Lipschitz gradients
\begin{equation}
    \|\nabla f_p (x)-\nabla f_p(y)\| \leq L_1\|x-y\|
\end{equation}
where $x$ and $y$ denote any vectors of public parameters, and (1) always has at least one optimal solution $x*$, i.e., $\sum_{i=1}^p \nabla f_p (x*) = 0$.

(2) Denote the gradient with regards to parameter as a function of all parameters and the received neighborhood matrix:
\begin{equation}
\begin{gathered}
g_p^t=h_{\zeta_p^t} (\xi, X^0, ..., X^{K-1}), \Tilde{g}_p^t=h_{\zeta_p^t} (\xi, \hat{X}^0, ..., \hat{X}^{K-1}), \forall t, p
\end{gathered}
\end{equation}
We assume that $h_{\zeta_p^t} (x)$ is a Lipschitz function:
\begin{equation}
\begin{gathered}
|h_{\zeta_p^t} (x)-h_{\zeta_p^t} (y)| \leq L_2 \|x-y\|, \forall p, t
\end{gathered}
\end{equation}
(3) The Frobenius norm of aggregated neighborhood matrix is bounded by $G$:
\begin{equation}
\begin{gathered}
\|X^k\|_F^2\leq G, \forall k \in[0, K-1]
\end{gathered}
\end{equation}
(4) There exists a constant $M$ such that the bias of gradient estimation is bounded by:
\begin{equation}
\begin{gathered}
\|b_p^t\|^2+\sigma^2 \leq M \|\nabla f_p(x^t)\|^2, \forall p, t
\end{gathered}
\end{equation}
where $b_p^t=\mathbb{E}[\Tilde{g}_p^t-g_p^t]$ is the bias of gradient estimation.
\end{assumption}
Based on the above assumptions, we can have the following convergence guarantee.
\begin{theorem}
    Under Assumption \ref{ass:converge}, if each party sends to server the unbiased quantized gradients given by Definition \ref{def:quantScheme} for aggregation and update, we get:
    \begin{equation}
    \min_t \mathbb{E}[\|\nabla f(x^t)\|^2] \leq O\left(\frac{|\zeta|r^2+K(m+1)L_2G/p}{\log T}\right)
    \end{equation}
    by choosing the learning rate $\gamma^t=\frac{1}{(t+1)ML_1}$.
\end{theorem}
\begin{proof}
    Based on the Lipschitz smoothness assumption, it holds that:
    \begin{equation}
    \begin{gathered}
    f(x^{t+1})\leq f(x^t) - \gamma^t \langle\nabla f(x^t), \sum_p q(\Tilde{g}_p^t)\rangle + \frac{L_1(\gamma^t)^2 \|\sum_p q(\Tilde{g}_p^t)\|^2}{2}
    \end{gathered}
    \end{equation}
    By taking the expectation, we have:
    \begin{equation}
    \begin{gathered}
    \mathbb{E}[f(x^{t+1})]\leq \mathbb{E}[f(x^t)] - \gamma^t \mathbb{E}[\|\nabla f(x^t)\|^2] - \gamma^t \sum_p \mathbb{E}\langle f(x^t), b_p^t\rangle\\
    + \frac{L_1(\gamma^t)^2}{2}\left(\sum_p \left[\mathbb{E}[\|q(\Tilde{g}_p^t)-\Tilde{g}_p^t\|^2] + \mathbb{E}[\|f(x^t)+b_p^t\|^2]\right]\right)
    \end{gathered}
    \end{equation}
    By selecting $\gamma^t=\frac{1}{(t+1)ML_1}$, it follows that:
    \begin{equation}
    \begin{gathered}
    \left(\frac{1}{(t+1)ML_1}-\frac{1}{2(t+1)^2M^2L_1}\right) \mathbb{E}[\|\nabla f(x^t)\|^2] \leq \mathbb{E}[f(x^t)]-\mathbb{E}[f(x^{t+1})]+\frac{\sum_p \left(\mathbb{E}[\|q(\Tilde{g}_p^t)-\Tilde{g}_p^t\|^2]+\mathbb{E}[\| b_p^t\|^2]\right)}{2(t+1)^2M^2L_1}
    \end{gathered}
    \end{equation}
    Aggregating both sides over all iterations, we have:
    \begin{equation}
    \begin{gathered}
    \min_t \mathbb{E}[\|\nabla f(x^t)\|^2] \leq O\left(\frac{\sum_p \left(\mathbb{E}[\|q(\Tilde{g}_p^t)-\Tilde{g}_p^t\|^2]+\mathbb{E}[\| b_p^t\|^2]\right)}{\log T}\right)
    \end{gathered}
    \end{equation}
    Next, we examine the bound of $\mathbb{E}[\| b_p^t\|^2]$. Under the Lipschitz assumption of $h_{\zeta_p^t} (x)$, it holds that:
    \begin{equation}
    \label{biasbound}
        \mathbb{E}[\| b_p^t\|^2] \leq \frac{K(m+1)L_2G}{p} 
    \end{equation}
    Plug equation \ref{ternarymeanvar} and \ref{biasbound} into the convergence function, we have:
    \begin{equation}
    \begin{gathered}
    \min_t \mathbb{E}[\|\nabla f(x^t)\|^2] \leq O\left(\frac{|\zeta|r^2+K(m+1)L_2G/p}{\log T}\right)
    \end{gathered}
    \end{equation}
\end{proof}

\section{Implementation and Hyper-parameters Setting}
\label{app:hyperpara}
The experiment is implemented on Ubuntu Linux 20.04 server with 16-core CPU and 64GB RAM, where the programming language is Python.

Cross-validation is adopted to tune the hyper-parameter, where the training-validation-testing ratio is $60\%$-$20\%$-$20\%$. Each experiment is run for five rounds. The model parameters are updated using Adagrad algorithm \cite{duchi2011adaptive}. Based on the hyper-parameter optimization, we set embedding dimension $D$ to $6$, layer size $K$ to $2$, learning rate $\eta$ to $0.05$, and neighbor threshold $thd$ to $4$ for ML-1M and $8$ for BookCrossing. We use sigmoid as the activation function.

We consider privacy parameter $r$ from $2$ to $50$, inverse dimension reduction ratio $N_u/q$ from $1$ to $100$, and participation rate from $0.2$ to $1$. The immediate gradients are clipped within $[-0.5, 0.5]$ so that $\|g_1-g_2\|_{\infty} \leq 1$ before applying ternary quantization.

\section{De-anonymization Attack for VerFedGNN}
\label{attackver}
\subsection{Attack for GCN}
 The attacker can obtain the $v_j$'s reconstructed weighted embedding $\hat{e}_{v_j}^0 = \text{Rec}(e_{v_j}^0/\sqrt{N_{v_j}})$ from adversarial user $j$ for $v_j\in \mathcal{N}_{adv}$. For any honest user $u$, given their perturbed embedding aggregation for the initial layer $E_p(\hat{n}_u^0)$, the attacker could develop a mixed integer programming problem, with binary variables $x^p\in \{0, 1\}^{N_v^p}$ denoting the presence of item $v$ in user $u$'s neighborhood. Denote $\mathcal{N}_{adv}^p$ as the items rated by the adversarial users in party $p$.
\begin{equation}
\begin{gathered}
\label{eq:GCNattack}
    \text{objective:}\ \|E_p(\hat{n}_u^0) - \sum_{v_j\in \mathcal{N}_{adv}^p} x_{v_j}^p \text{Rec}(e_{v_j}^0/\sqrt{N_{v_j}})/\sqrt{c})\|_1\\
    \text{s.t.:} \sum_{v_j\in \mathcal{N}_{adv}} x_{v_j} = c
\end{gathered}
\end{equation}
for $c\in [1, N_v^p]$. As enumerating $c$ from $1$ to $N_v^p$ has complexity of $\mathcal{O}(2^{N_v^p})$, we set the upper limit of $c$ as $3$ to make it computationally feasible.

\subsection{Attack for GAT}
The attacker can obtain the $v_j$'s reconstructed weighted embedding $\hat{e}_{v_j}^0 = \text{Rec}(e_{v_j}^0 E_p(b_{uv}^0))$ from adversarial user $j$ for $v_j\in \mathcal{N}_{adv}$. One challenge is that the attacker couldn't obtain the $E_p(b_{uv}^0),\ u\in \mathcal{N}(u),\ v\in \mathcal{N}^p(v)$ to find the matched subset items. A efficient solution is to estimate $E_p(b_{uv}^0)$ with attacker's local average of connected tuples, and obtain the estimated $\hat{e}_{v_j}^0$. The attacker could develop a integer programming.
\begin{equation}
\begin{gathered}
\label{eq:GATattack}
    \text{objective:}\ \|E_p(\hat{n}_u^0) - \sum_{v_j\in \mathcal{N}_{adv}^p} x_{v_j}^p \Bar{b}_{uv}\hat{e}_{v_j}^0 \|_1\\
\end{gathered}
\end{equation}
where $\Bar{b}_{uv}$ denote the average of $b_{uv}$ from party $p$'s view.

\subsection{Attack for GNN}
 The attacker can obtain the $v_j$'s reconstructed weighted embedding $\hat{e}_{v_j}^0 = \text{Rec}(e_{v_j}^0)$ from adversarial user $j$ for $v_j\in \mathcal{N}_{adv}$. Given honest user $u$'s aggregated embedding $E_p(\hat{n}_u^0)$, the attacking party could develop a integer programming similar to equation \ref{eq:GCNattack}.
\begin{equation}
\begin{gathered}
\label{eq:GGNNattack}
    \text{objective:}\ \|E_p(\hat{n}_u^0) - \sum_{v_j\in \mathcal{N}_{adv}^p} x_{v_j}^p \text{Rec}(e_{v_j}^0)/c)\|_1\\
    \text{s.t.:} \sum_{v_j\in \mathcal{N}_{adv}} x_{v_j} = c
\end{gathered}
\end{equation}
for $c\in [1, N_v^p]$. We set the upper limit of $c$ as $3$ to make it computationally feasible.

\section{Experiment Results of Other Studies}
\label{app:expother}

\subsection{Impact of Random Projection on De-anonymization attack}
To investigate the effectiveness of dimension reduction mechanism in terms of privacy protection, we simulate the de-anonymization attack against VerFedGNN under the case with and without dimension reduction. The impact is conducted using GCN and GGNN models as in GAT the unknown $b_{uv}$ hinders the launch of our attack. Without random projection, the attach challenge comes from two sources: (i) it is computationally infeasible to enumerate the combination of all items, and thus we limit the rated item size no larger than $3$ from each party; (ii) only a proportion of item is accessed by the adversarial users. 

The results are reported in Table \ref{tbl:attackaccRPML} and \ref{tbl:attackaccRPBK} assuming $p_{ad}=0.5$. It can be observed that for BookCrossing, F1 is significantly reduced by more than $70\%$ for both models. For ML-1M, F1 is reduced by $12\%$ and $30\%$ for GCN and GGNN models, respectively. The higher accuracy of BookCrossing might result from its sparser interaction matrix that makes it more likely to infer the rated items.
\begin{table}[H]
\caption{Attack accuracy with and without random projection on ML-1M dataset}
\label{tbl:attackaccRPML}
\vskip 0.1in
\begin{center}
\begin{small}
\begin{tabular}{lcccc}
\toprule
 & Models & Precision & Recall & F1 \\
\midrule
 \multirow{2}{*}{With Random Projection} & GCN &  $0.0176$ \tiny$\pm 0.0074$\normalsize & $0.0107$ \tiny$\pm 0.0056$\normalsize & $0.0132$ \tiny$\pm 0.0064$\normalsize\\
&GGNN & $0.0210$ \tiny$\pm 0.0041$\normalsize & $0.0134$ \tiny$\pm 0.0035$\normalsize & $0.0160$ \tiny$\pm 0.0027$\normalsize\\
\midrule
\multirow{2}{*}{Without Random Projection} & GCN & $0.0395$ \tiny$\pm 0.0102$\normalsize & $0.0097$ \tiny$\pm 0.0038$\normalsize & $0.0150$ \tiny$\pm 0.0051$\normalsize\\
&GGNN & $0.0511$ \tiny$\pm 0.0125$\normalsize & $0.0148$ \tiny$\pm 0.0050$\normalsize & $0.0228$ \tiny$\pm 0.0069$\normalsize\\
\bottomrule
\end{tabular}
\end{small}
\end{center}
\vskip -0.1in
\end{table}

\begin{table}[H]
\caption{Attack accuracy with and without random projection on BookCrossing dataset}
\label{tbl:attackaccRPBK}
\vskip 0.1in
\begin{center}
\begin{small}
\begin{tabular}{lcccc}
\toprule
 & Models & Precision & Recall & F1 \\
\midrule
 \multirow{2}{*}{With Random Projection} & GCN & $0.0026$ \tiny$\pm 0.0010$\normalsize & $0.0071$ \tiny$\pm 0.0052$\normalsize & $0.0037$ \tiny$\pm 0.0018$\normalsize\\
&GGNN & $0.0033$ \tiny$\pm 0.0002$\normalsize & $0.0064$ \tiny$\pm 0.0006$\normalsize & $0.0044$ \tiny$\pm 0.0002$\normalsize\\
\midrule
\multirow{2}{*}{Without Random Projection} & GCN & $0.0361$ \tiny$\pm 0.0139$\normalsize & $0.0077$ \tiny$\pm 0.0027$\normalsize & $0.0127$ \tiny$\pm 0.0045$\normalsize\\
&GGNN & $0.0500$ \tiny$\pm 0.0176$\normalsize & $0.0224$ \tiny$\pm 0.0108$\normalsize & $0.0305$ \tiny$\pm 0.0129$\normalsize\\
\bottomrule
\end{tabular}
\end{small}
\end{center}
\vskip -0.1in
\end{table}

\subsection{Adding Laplace Noise on Gradients}
we evaluate the model performance of our framework when Laplace noise is added to the gradients in place of ternary quantization scheme. We set $\epsilon=1$ for Laplace mechanism and $r=\frac{1}{3}$ for ternary quantization scheme. The results are presented in the Table \ref{tbl:laplaceternary}.
\begin{table}[H]
\caption{Performance of different methods. The values denote the $mean \pm standard\ deviation$ of the performance.}
\label{tbl:laplaceternary}
\vskip 0.1in
\begin{center}
\begin{small}
\begin{tabular}{llcc}
\toprule
  & Model & ML-1M & BookCrossing \\
\midrule
\multirow{3}{*}{Laplace} & GCN & $0.9594$ \tiny$\pm 0.0014$\normalsize & $1.7360$ \tiny$\pm 0.0114$\normalsize  \\
 & GAT& $0.9318$ \tiny$\pm 0.0013$\normalsize & $1.6936$ \tiny$\pm 0.0191$\normalsize \\
 & GGNN&$0.9304$ \tiny$\pm 0.0008$\normalsize & $1.7042$ \tiny$\pm 0.0164$\normalsize \\
\midrule
\multirow{3}{*}{Ternary Quantization} &GCN & $0.9152$ \tiny$\pm 0.0013$\normalsize & $1.5906$ \tiny$\pm 0.0030$\normalsize  \\
 & GAT& $0.9146$ \tiny$\pm 0.0010$\normalsize & $1.5830$ \tiny$\pm 0.0131$\normalsize \\
 & GGNN& $0.9076$ \tiny$\pm 0.0024$\normalsize & $1.6962$ \tiny$\pm 0.0050$\normalsize \\
\bottomrule
\end{tabular}
\end{small}
\end{center}
\vskip -0.1in
\end{table}

\subsection{Scalability of VerFedGNN}
To validate the scalability of VerFedGNN, we conducted supplementary experiments on Yelp dataset\footnote{https://www.yelp.com/dataset}, with 6,990,280 ratings, 1,987,929 users, and 150,346 items. We followed most of the hyper-parameters in Appendix \ref{app:hyperpara}, except that embedding dimension $D=20$. The model is successfully deployed on the dataset. We use GCN model as an example to explain the performance.

VerFedGNN achieves RMSE of 1.312, a small drop from 1.302 of CentralGNN while still a significant improve from 1.41 of MF method.
In terms of computation cost, the per epoch computation time is around 18 seconds on client side and 5 seconds on server side.

The per iteration communication cost is 660.9MB per party at participation of $0.5$, adding up to $6609$MB for all parties. The transmission time required for this communication cost is $89$ seconds under a per-client bandwidth of $100$MB/s. The communication time could be reduced by having each party selecting a portion of common users in each round, rather than updating on on all users. For example, by choosing $1/20$ users in each iteration, the communication time would be reduced to $4.5$ second. The method for conducting user selection will be the subject of future research.

\subsection{De-anonymization Attack against Gradients}
To examine the effectiveness of gradient protection, we ran experiments on de-anonymization attack against the user embedding gradients, with steps given as followed:
\begin{itemize}
    \item Obtain the gradients for adversarial users.
    \item Find a subset of adversarial users such that their gradient sum is closest to the victim’s gradient. (For ternary quantization, we normalized the gradient sum to $[-1,1]$).
    \item The inferred items are those rated by the adversarial users.
\end{itemize}

Table \ref{tbl:attckgrad} compares the attack accuracy under three federated methods using GCN model on ML-1M datasets. The ternary quantization mechanism significantly lowers the risk from inference attack.

\begin{table}[t]
\caption{Attack accuracy against gradients using GCN model on ML-1M.}
\label{tbl:attckgrad}
\vskip 0.1in
\begin{center}
\begin{small}
\begin{tabular}{llccc}
\toprule
 $p_{ad}$ & Methods & Precision & Recall & F1 \\
\midrule
 \multirow{3}{*}{$0.2$} &FedPerGNN & $0.0832$ \tiny$\pm 0.026$\normalsize & $0.0094$ \tiny$\pm 0.003$\normalsize & $0.0168$ \tiny$\pm 0.006$\normalsize\\
&FedSage+ & $0.2383$ \tiny$\pm 0.102$\normalsize & $0.0116$ \tiny$\pm 0.003$\normalsize & $0.0220$ \tiny$\pm 0.006$\normalsize\\
&VerFedGNN & $0.0527$ \tiny$\pm 0.007$\normalsize & $0.0030$ \tiny$\pm 0.0012$\normalsize & $0.0057$ \tiny$\pm 0.002$\normalsize\\
\midrule
 \multirow{3}{*}{$0.5$}& FedPerGNN & $0.1033$ \tiny$\pm 0.013$\normalsize & $0.0105$\tiny$\pm 0.001$\normalsize & $0.0191$ \tiny$\pm 0.003$\normalsize\\
& FedSage+ & $0.2775$ \tiny$\pm 0.067$\normalsize & $0.0123$ \tiny$\pm 0.002$\normalsize & $0.0236$ \tiny$\pm 0.004$\normalsize\\
& VerFedGNN & $0.0675$ \tiny$\pm 0.021$\normalsize & $0.0036$ \tiny$\pm 0.003$\normalsize & $0.006$ \tiny$\pm 0.005$\normalsize\\
\midrule
 \multirow{3}{*}{$0.8$} & FedPerGNN & $0.1061$ \tiny$\pm 0.019$\normalsize & $0.0018$\tiny$\pm 0.004$\normalsize & $0.0210$ \tiny$\pm 0.007$\normalsize\\
&FedSage+ & $0.1485$ \tiny$\pm 0.016$\normalsize & $0.0119$\tiny$\pm 0.005$\normalsize & $0.0251$ \tiny$\pm 0.008$\normalsize\\
&VerFedGNN & $0.0623$ \tiny$\pm 0.026$\normalsize & $0.0037$ \tiny$\pm 0.002$\normalsize & $0.0070$ \tiny$\pm 0.003$\normalsize\\
\bottomrule
\end{tabular}
\end{small}
\end{center}
\vskip -0.1in
\end{table}
%%%%%%%%%%%%%%%%%%%%%%%%%%%%%%%%%%%%%%%%%%%%%%%%%%%%%%%%%%%%%%%%%%%%%%%%%%%%%%%
%%%%%%%%%%%%%%%%%%%%%%%%%%%%%%%%%%%%%%%%%%%%%%%%%%%%%%%%%%%%%%%%%%%%%%%%%%%%%%%

\end{document}